\theoremstyle{definition}
\newtheorem{prop}{Proposition}
\newcommand{\tr}{\operatorname{Tr}}
\def\bra#1{\mathinner{\langle{#1}|}}
\def\ket#1{\mathinner{|{#1}\rangle}}
\def\braa#1{\mathinner{\langle\!\langle{#1}|}}
\def\kett#1{\mathinner{|{#1}\rangle\!\rangle}}
\newcommand{\proj}[1]{\ket{#1}\!\!\bra{#1}}
\newcommand{\phy}{\text{phy}}
\newcommand{\aux}{\text{aux}}
\newcommand{\abs}[1]{|#1|}
\begin{document}

\begin{CJK*}{UTF8}{mj}

\title{Quasi-Lindblad pseudomode theory for open quantum systems}

\author{Gunhee Park (박건희)}
\affiliation{Division of Engineering and Applied Science, California Institute of Technology, Pasadena, California 91125, USA}
\author{Zhen Huang}
\affiliation{Department of Mathematics, University of California, Berkeley, California 94720, USA}
\author{Yuanran Zhu}
\affiliation{Applied Mathematics and Computational Research Division, Lawrence Berkeley National Laboratory, Berkeley, California 94720, USA}
\author{Chao Yang}
\affiliation{Applied Mathematics and Computational Research Division, Lawrence Berkeley National Laboratory, Berkeley, California 94720, USA}
\author{Garnet Kin-Lic Chan}
\affiliation{Division of Chemistry and Chemical Engineering, California Institute of Technology, Pasadena, California 91125, USA}
\author{Lin Lin}
\affiliation{Department of Mathematics, University of California, Berkeley, California 94720, USA}
\affiliation{Applied Mathematics and Computational Research Division, Lawrence Berkeley National Laboratory, Berkeley, California 94720, USA}

\begin{abstract}
    We introduce a new framework to study the dynamics of open quantum systems with linearly coupled Gaussian baths. Our approach replaces the continuous bath with an auxiliary discrete set of pseudomodes with dissipative dynamics, but we further relax the complete positivity requirement in the Lindblad master equation and formulate a quasi-Lindblad pseudomode theory.
    We show that this quasi-Lindblad pseudomode formulation directly leads to a representation of the bath correlation function in terms of a complex weighted sum of complex exponentials, an expansion that is known to be rapidly convergent in practice and thus leads to a compact set of pseudomodes. The pseudomode representation is not unique and can differ by a gauge choice. When the global dynamics can be simulated exactly, the system dynamics is unique and independent of the specific pseudomode representation. However, the gauge choice may affect the stability of the global dynamics, and we provide an analysis of why and when the global dynamics can retain stability despite losing positivity.
    We showcase the performance of this formulation across various spectral densities in both bosonic and fermionic problems, finding significant improvements over conventional pseudomode formulations. 
\end{abstract}

\maketitle
\end{CJK*}

\section{Introduction}

Open quantum systems (OQS) are central to many fields, including quantum optics, condensed matter physics, chemical physics, and quantum information science~\cite{rivas2012open, breuer2007open, berkelbach2020open, LeggettRevModPhys.59.1}. The most widely studied case corresponds to a system coupled to a bath with continuous (bosonic or fermionic) degrees of freedom with the bath dynamics governed by a Hamiltonian quadratic in the bath field operators (a Gaussian bath) and with a system-bath coupling linear in the bath field operators. The primary task in such setups is to compute the system's dynamical observables.

There are two main classes of approaches to simulate this problem. The first approach integrates out the Gaussian bath dynamics to yield a path integral for the reduced system dynamics~\cite{FEYNMAN2000547}, which can then be evaluated using various approximations~\cite{Makri1995a, Strathearn2018, Ye2021, Ng2023, Muehlbacher2008, PhysRevLett.115.266802, NunezFernandez2022, ivander2024unifiedframeworkopenquantum}. The second approach, which is the focus of this work, replaces the continuous physical bath with a discrete mode representation~\cite{Vega2015, Prior2010, WoodsCramerPlenio2015, NusspickelBooth2020, Garraway1997, Dalton2001, Tamascelli2018, Schwarz2016, Lotem2020, Brenes2020, Zwolak2017, Trivedi2021}, i.e., an auxiliary bath, allowing the global state of the system and auxiliary bath to be computationally propagated in time. In both approaches, the influence of the bath on the reduced system dynamics is entirely determined by the bath correlation function (BCF)~\cite{FEYNMAN2000547, breuer2007open}. As long as the auxiliary BCF matches the original BCF, the reduced system dynamics can be shown to be identical~\cite{Tamascelli2018}. Therefore, the primary consideration in this second approach is to begin with a compact auxiliary bath representation of the BCF. 

The conventional discrete mode representation is to directly discretize the continuous Hamiltonian~\cite{Vega2015, Prior2010, WoodsCramerPlenio2015} without introducing any dissipation. However, this discretization requires a large number of modes in the long-time limit, which scales linearly with the propagated time~\cite{Vega2015}. 
In most physical cases, the continuous bath provides dissipation to the system.  It is thus natural to introduce an ansatz where the auxiliary bath dynamics is dissipative, in which case, the bath modes are referred to as pseudomodes~\cite{Garraway1997, Dalton2001, Tamascelli2018, Mascherpa2020, Chen_2019, LI2021127036}. The Liouvillian for the system and auxiliary bath can then be chosen to satisfy the constraint of physical dynamics, taking a Lindblad form and generating a completely positive trace-preserving (CPTP) dynamics. The corresponding BCF (assuming independent pseudomodes) is then a sum of \textit{complex} exponential functions weighted by \textit{positive} numbers. This BCF exactly represents the continuous spectral density as a positive weighted sum of Lorentzian functions. 
Unfortunately, many studies have found that this natural Lorentzian pseudomode expansion is inefficient, achieving only a slow rate of convergence as the number of pseudomodes increases~\cite{Mascherpa2020, Trivedi2021, Lednev2024}. Improving this convergence rate remains an active area of research, with approaches including the introduction of additional intra-bath Hamiltonian terms~\cite{Mascherpa2020, Lednev2024, Dorda2014, Dorda2015} or non-Hermitian pseudomode formulations~\cite{Lambert2019, Pleasance2020, Cirio2023, menczel2024nonhermitian}.

In the current work, we demonstrate that by relaxing the complete positivity (CP) of the global dynamics of the system and pseudomodes, we gain the flexibility to use a broader set of functions to approximate the BCF, significantly enhancing the convergence rate of the pseudomode expansion. The generator of the dynamics is a slight modification of the Lindblad form, which we refer to as a quasi-Lindblad pseudomode representation. The BCF is expressed as a sum of \textit{complex} exponentials weighted by \textit{complex} numbers. This ansatz is well-known in the signal processing and applied mathematics communities for yielding an exponentially fast convergence rate for many functions~\cite{beylkinMonzon2005, PaulrajRoyKailath1986, Fannjiang2020, Nakatsukasa2018aaa}, a result we also observe in our numerical tests. Additionally, this ansatz for the BCF has been adopted in non-pseudomode approaches to OQS dynamics, such as hierarchical equations of motion (HEOM)~\cite{Stockburger2022, xu2023universal, DanShi2023, ChenWangZhengetal2022, Takahashi2024} and tensor network influence functionals~\cite{vilkoviskiy2023bound, guo2024efficient}, providing a connection among these various methods.

It is worth noting that the BCF alone does not uniquely determine the pseudomode formulation. Meanwhile, the system dynamics is uniquely determined by the BCF and is independent of the pseudomode formulation as long as the global dynamics can be simulated exactly (i.e., without further numerical approximations such as truncating the number of bosonic modes or even finite precision arithmetic operations). However, 
certain pseudomode formulations can be unstable if the corresponding Liouvillian has eigenvalues with positive real parts, leading to exponentially growing modes. This phenomenon has also been observed in other contexts, such as HEOM~\cite{Krug2023, Dunn2019instabilities, Yan2020heomstability, LiYanShi2022}.
Therefore, in practice, the stability of the pseudomode formulation can indeed impact the quality of system dynamics. 
Nevertheless, with a careful choice of the pseudomode formulation that minimally violates the CP condition, we observe that the global dynamics can be stable across a range of physically relevant OQS in our simulations. We attribute this stability to the mixing of the divergent and dissipative parts of the global dynamics via the coherent terms in the Liouvillian.

\begin{figure}[t]
    \centering
    \includegraphics[width=1.0\columnwidth]{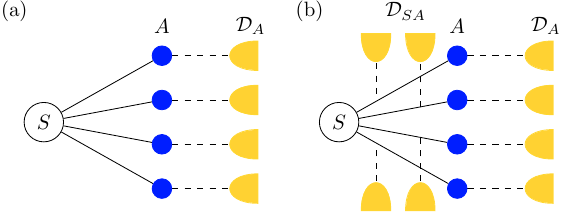}
    \caption{Pseudomode theory defines a discrete auxiliary dissipative bath $A$ to reproduce a system-reduced density operator. (a) The conventional pseudomode assumes bath-only dissipators, $\mathcal{D}_A$. (b) Our quasi-Lindblad pseudomode relaxes the complete positivity condition by introducing an additional system-bath Lindblad coupling with dissipators, $\mathcal{D}_{SA}$. 
     }
    \label{fig:fig_pseudomode}
\end{figure}

\section{Quasi-Lindblad pseudomode construction}

\subsection{Problem setup: original unitary OQS} 
\label{sec:unitary}

We consider a quantum system $S$ linearly coupled to a continuous Gaussian bath $B$, either bosonic or fermionic, evolving under the following Hamiltonian,
\begin{align}
    \hat{H}_{\phy} &= \hat{H}_S + \hat{H}_B + \hat{H}_{SB} \nonumber \\
     &= \hat{H}_S + \int \! d\omega \ \omega \ \hat{c}_{\omega}^\dag \hat{c}_{\omega} + \sum_j \hat{S}_{j} \hat{B}_j, \label{eq:ham}
\end{align}
where $\hat{H}_S$ denotes the system-only Hamiltonian and $\hbar=1$. 
$\hat{c}_{\omega}^\dag$ denotes the creation operator for the bath mode with energy $\omega$. In the system-bath coupling term, $\hat{H}_{SB}$, $\hat{S}_j$ and $\hat{B}_j$ are operators within the system and bath, respectively, and can be assumed to be Hermitian operators, without loss of generality~\cite{rivas2012open}. The bath coupling operator is given by $\hat{B}_j = \int d\omega \ g_{j}(\omega) \hat{c}_{\omega} + g_{j}^*(\omega) \hat{c}_{\omega}^\dag$. We assume the initial state to take a factorized form, $\hat{\rho}_{SB}(0) = \hat{\rho}_{S}(0) \otimes \hat{\rho}_{B} (0)$, where $\hat{\rho}_B(0)$ is a Gaussian state. We will also assume that $\hat{\rho}_B(0)$ is a number-conserving operator (and also for all initial bath reduced density operators henceforth) for simplicity.

Time-evolution of the density operator is described by the von Neumann equation, $\partial_t \hat{\rho}_{SB} = -i[\hat{H}_{\phy},\hat{\rho}_{SB}]$, and the system-reduced density operator is obtained by tracing out the bath, $ \hat{\rho}_S(t)=\tr_B[\hat{\rho}_{SB}(t)]$. The influence of the Gaussian bath on the system-reduced density operator is determined by the bath correlation function (BCF),
\begin{equation}\label{eq:bcf_def}
     C_{jj'}^{B}(t,t') = \tr_B \left[ \hat{B}_j(t) \hat{B}_{j'}(t') \hat{\rho}_B (0) \right],
\end{equation}
for $t \geq t' \geq 0$ with $\hat{B}_j(t) = e^{i\hat{H}_B t}\hat{B}_j e^{-i\hat{H}_B t}$. When $\hat{\rho}_B(0)$ is stationary, i.e., $e^{-i\hat{H}_B t}\hat{\rho}_B(0)e^{i\hat{H}_B t}=\hat{\rho}_B(0)$, the BCF only depends on the time difference $t-t'$, i.e., $C_{jj}^B(t, t') = C_{jj}^B(t-t')$.

\subsection{Conventional pseudomode} \label{sec:conventional_pseudomode}

In many cases of interest, the above unitary dynamics using a continuous bath leads to a dissipative bath correlation function, i.e., a decaying $C_{jj}^B(t- t')$ with $\lim_{t-t'\to \infty} C_{jj}^B(t-t') \to 0$.
To reproduce the associated dynamics of $\hat{\rho}_S(t)$ with a discrete bath, we now introduce a pseudomode description.
First, we review the conventional pseudomode formulation~\cite{Tamascelli2018}, which introduces an auxiliary bosonic bath $A$ with a Lindblad dissipator $\mathcal{D}_A$  applied within the bath $A$. The density operator across the system and auxiliary bath, $\hat{\rho}_{SA}$, evolves under the following Lindblad master equation,
\begin{equation}
    \partial_t \hat{\rho}_{SA} = -i \left[ \hat{H}_{\aux}, \hat{\rho}_{SA} \right]
    + \mathcal{D}_A   \hat{\rho}_{SA}, \label{eq:pseudomode_eom}
\end{equation}
where the Hamiltonian $\hat{H}_{\aux}=\hat{H}_S + \hat{H}_A + \hat{H}_{SA}$ is composed of the system (bath)-only Hamiltonian $\hat{H}_S$ ($\hat{H}_A$) and coupling Hamiltonian $\hat{H}_{SA}= \sum_j \hat{S}_{j} \hat{A}_j$. $\mathcal{D}_A$ has a Lindblad form 
\begin{equation}
    \mathcal{D}_A  \bullet = \sum_\mu \hat{L}_\mu \bullet \hat{L}_\mu^\dag - \frac{1}{2} \{\hat{L}_\mu^\dag  \hat{L}_\mu, \bullet \},
\end{equation}
which guarantees the CPTP property of the global dynamics consisting of the system and the pseudomodes. Here, $\hat{L}_\mu$ can be an arbitrary operator within the bath $A$. The 
Lindblad dissipator can be expressed in terms of a pseudomode basis,
\begin{equation}
    \mathcal{D}_A \bullet = 2 \sum_{kk'} \Gamma_{kk'} \left( \hat{F}_{k'} \bullet \hat{F}_{k}^\dag - \frac{1}{2} \{\hat{F}_{k}^\dag \hat{F}_{k'}, \bullet \} \right), 
\end{equation}
where $\hat{F}_k$ is an operator within the $k$th pseudomode, and $\Gamma_{kk'}$ is a positive semidefinite (PSD) matrix from the CPTP condition. $\hat{H}_A$ and $\mathcal{D}_A$ are quadratic in the creation and annihilation operators of the pseudomode, and $\hat{A_j}$ and $\hat{F}_k$ are linear, to maintain the Gaussian properties of the bath. $\mathcal{D}_A$ has negative eigenvalues, and thus, we can interpret it as adding dissipation to the reduced system dynamics.

The BCF for the pseudomode can be defined using the bath-only Liouvillian $\mathcal{L}_A \bullet = -i[\hat{H}_A, \bullet ] + \mathcal{D}_A \bullet$ and initial pseudomode reduced density operator $\hat{\rho}_A(0)$,
\begin{equation}
    C_{jj'}^{A}(t,t') =\tr_A\left[\hat{A}_j e^{\mathcal{L}_A (t-t')} \hat{A}_{j'} e^{\mathcal{L}_A t'} \hat{\rho}_A(0)\right].
    \label{eq:bcf_A}
\end{equation}
Ref.~\cite{Tamascelli2018} established the equivalence condition between the unitary bath dynamics and the pseudomode dynamics in terms of the BCF: if $C^B_{jj'}(t,t') = C^A_{jj'}(t,t')$ for $\forall t \geq t' \geq 0$, then $\tr_B[\hat{\rho}_{SB}(t)] = \tr_A [\hat{\rho}_{SA}(t)]$~\footnote{In the original formulation in \cite{Tamascelli2018}, the additional conditions on the one-point bath expectation values $\tr[\hat{B}_j(t)\hat{\rho}_B(0)]$ were imposed. In the main text, we further assumed that the initial bath reduced density operators, $\hat{\rho}_B(0)$ and $\hat{\rho}_A(0)$, are number-conserving operators. It leads to the one-point bath expectation values becoming zero for both the original unitary and pseudomode formulations.}. As in the unitary case in Sec.~\ref{sec:unitary}, when $\hat{\rho}_A(0)$ is stationary, i.e., $e^{\mathcal{L}_A t'} \hat{\rho}_A(0)=\hat{\rho}_A(0)$, the BCF depends solely on $t-t'$.

\subsection{Quasi-Lindblad pseudomode}

As described in the introduction, the conventional pseudomode description can reproduce a dissipative bath correlation function within a finite discretization, but it typically requires a large number of pseudomodes for an accurate approximation to the BCF. 
To remedy this, we now provide a more general pseudomode description that introduces an additional system-bath coupling $\mathcal{D}_{SA}$, expressed in terms of Lindblad dissipators,
\begin{equation}
    \mathcal{D}_{SA} \bullet = \sum_{j} \hat{L}'_{j} \bullet \hat{S}_j + \hat{S}_j \bullet \hat{L}'^\dag_{j} - \frac{1}{2}\{ \hat{S}_j \hat{L}'_{j} + \hat{L}'^\dag_{j}  \hat{S}_j, \bullet  \},
\end{equation}
where $\hat{L}'_j = \sum_k 2 M_{jk} \hat{F}_k$. We call this coupling a Lindblad coupling, as illustrated in Fig.~\ref{fig:fig_pseudomode}b. The total dissipator $\mathcal{D} = \mathcal{D}_A + \mathcal{D}_{SA}$ can be compactly written as
\begin{equation}
    \mathcal{D} \bullet = 2 \sum_{pq} \widetilde{\Gamma}_{pq} \left(\hat{F}_q \bullet \hat{F}^\dag_p - \frac{1}{2} \{\hat{F}^\dag_p \hat{F}_q, \bullet \} \right),
\end{equation}
where the index $p$ and $q$ refer to both the coupling index $j$ and the pseudomode bath index $k$, i.e., $p \in \{ j \} \cup \{ k \}$, and if $p=j$, $\hat{F}_j = \hat{S}_j$. The coefficient $\widetilde{\Gamma}_{pq}$ is
\begin{equation}
    \widetilde{\bm{\Gamma}} = \left(
    \begin{array}{cc}
    \bm{0} & \bm{M} \\
   \bm{M}^\dag & \bm{\Gamma}
    \end{array}
    \right),
\end{equation}
where the bold font denotes matrices. We note that the matrix $\widetilde{\Gamma}_{pq}$ is not PSD in general, unlike the PSD matrix $\Gamma_{pq}$, and hence, this pseudomode equation of motion violates the CP condition in the Lindblad master equation. Therefore, we call it a quasi-Lindblad pseudomode representation.

To take into account both the Hamiltonian and Lindblad coupling in the BCF, we define the system-bath Liouvillian $\mathcal{L}_{SA} \bullet = -i [\hat{H}_{SA}, \bullet] + \mathcal{D}_{SA} \bullet $ and further decompose it in the following form,
\begin{equation}
    \mathcal{L}_{SA} = -i \sum_j \mathcal{S}_j \mathcal{F}_j +i \sum_j \widetilde{\mathcal{S}}_j \widetilde{\mathcal{F}}_j ,
\end{equation}
where the system superoperators, $\mathcal{S}_j$ and $\widetilde{\mathcal{S}}_j$ are defined as $\mathcal{S}_j \bullet = \hat{S}_j \bullet$ and $\widetilde{\mathcal{S}}_j \bullet = \bullet \hat{S}_j $. The superoperators, $\mathcal{F}_j$ and $\widetilde{\mathcal{F}}_j$, are obtained as,
\begin{equation}
    \begin{gathered}
        \mathcal{F}_j \bullet = \hat{A}_j \bullet +  \bullet i \hat{L}'^\dag_{j} - \frac{i}{2}  (\hat{L}'_{j} + \hat{L}'^\dag_{j} ) \bullet, \\
        \widetilde{\mathcal{F}}_j \bullet = \bullet \hat{A}_j - i \hat{L}'_{j} \bullet +  \bullet \frac{i}{2} (\hat{L}'_{j} + \hat{L}'^\dag_{j} ).
    \end{gathered}
    \label{eq:F_superoperator}
\end{equation}
The BCF is defined with the superoperators $\mathcal{F}_j$, 
\begin{equation}
    C_{jj'}^{A}(t,t') =\tr_A\left[\mathcal{F}_j e^{\mathcal{L}_A (t-t')} \mathcal{F}_{j'} e^{\mathcal{L}_A t'} \hat{\rho}_A(0)\right].
    \label{eq:bcf_F}
\end{equation}
We remark that $\mathcal{F}_j$ reduces to $\hat{A}_j$ when $\hat{L}'_j=0$, and then the BCF in Eq.~\ref{eq:bcf_F} also reduces to the BCF in Eq.~\ref{eq:bcf_A}. Similar to the pseudomode description in Sec.~\ref{sec:conventional_pseudomode}, the above BCF leads to the equivalence of the reduced system dynamics to that generated by the original unitary formulation under the equivalence of the BCFs.
We provide the derivation of the equivalence condition in the Supplemental Material (SM)~\cite{supp_mat}.

In this framework, violation of the CP condition leads the dissipator Liouvillian $\mathcal{D}$ to have eigenvalues with positive real parts. Thus, in the absence of any coherent term (i.e., the commutator term of the form $-i[\hat{H}_{\rm aux},\bullet]$ for some Hamiltonian $\hat{H}_{\rm aux}$), this would lead to unstable global dynamics with modes growing exponentially in time. Note that when simulations are performed exactly, the reduced system dynamics, after tracing out the bath, can still be perfectly reproduced from the BCF equivalence condition. However, this cannot be guaranteed in practical simulations involving finite precision arithmetic operations and physical approximations, such as truncating the number of bosonic modes. Therefore, the stability of the Liouvillian is an important issue to study.

Interestingly, we find that the Liouvillian, including both the coherent and dissipator terms, \textit{can} generate stable dynamics despite violating the CP condition. We will analyze this effect in more detail in Sec.~\ref{sec:sec4}.

\subsection{Ansatz for quasi-Lindblad pseudomode}

We next introduce an ansatz for the quasi-Lindblad pseudomode that yields a simple form for the BCF, facilitating numerical fitting. We describe this ansatz for both bosonic and fermionic baths. We mainly target the reproduction of the BCF of the unitary dynamics in Sec.~\ref{sec:unitary} with the initial bath being a stationary state, 
such as a thermal state $\hat{\rho}_B(0) \propto e^{-\beta \hat{H}_B}$, where $\beta$ represents the inverse temperature.

We start with a bosonic bath, setting the initial reduced density operator to a vacuum state, $\hat{\rho}_A(0) = \proj{\bm{0}}$. The dissipator $\mathcal{D}_A$ is set as $\hat{F}_k = \hat{d}_k$, where $\hat{d}_k$ is the bosonic annihilation operator of the $k$th pseudomode, and the Hamiltonian $\hat{H}_A = \sum_{kk'} H^A_{kk'} \hat{d}_k^\dag \hat{d}_{k'}$. This ansatz satisfies the stationary condition $\mathcal{L}_A \hat{\rho}_A(0)=0$. Then, $e^{\mathcal{L}_A t'} \hat{\rho}_A(0) = \hat{\rho}_A(0)$ and $C^A_{jj'}(t,t')$ is only dependent on the time difference $C^A_{jj'}(t,t') = C^A_{jj'}(t-t') = C^A_{jj'}(\Delta t)$, where $\Delta t = t - t'$. The system-bath coupling is set as $\hat{A}_j \ = \sum_k V_{jk} \hat{d}_k + V_{jk}^* \hat{d}_k^\dag$.

We obtain the BCF based on the above ansatz:
\begin{equation}
    \bm{C}^A(\Delta t) = (\bm{V}-i\bm{M}) e^{-\bm{Z}\Delta t} (\bm{V}+i\bm{M})^\dag,
    \label{eq:Corrt_multisite}
\end{equation}
where $\bm{Z} = \bm{\Gamma} +i\bm{H}^A $. This BCF expression has a gauge redundancy: $\bm{Z} \rightarrow \bm{G}\bm{Z}\bm{G}^{-1}$, $\bm{V}-i\bm{M} \rightarrow (\bm{V}-i\bm{M}) \bm{G}^{-1}$, $\bm{V} + i \bm{M} \rightarrow (\bm{V}+i\bm{M}) \bm{G}^{\dagger}$, for arbitrary invertible matrix $\bm{G}$. Therefore, we focus on a diagonal matrix $\bm{Z}=\mathrm{diag}(\{z_k\})$, which can be transformed to a more general diagonalizable $\bm{Z}$ through a unitary transformation~\footnote{In principle, we have a general expression by considering non-diagonalizable $\bm{Z}$. However, numerically, fitting with arbitrary $\bm{Z}$ yields the same optimal solution as with diagonalizable $\bm{Z}$.}. In this case, we define $\hat{H}_A = \sum_k E_k \hat{d}^\dag_k \hat{d}_k$ and $\Gamma_{kk'}=\gamma_k \delta_{kk'}$, and  $z_k = \gamma_k+iE_k$. Even with a diagonal matrix $\bm{Z}$, $\bm{V}$ and $\bm{M}$ still have a gauge redundancy associated with a diagonal $\bm{G}$, since $\bm{Z}$ remains invariant under the transformation $\bm{G}\bm{Z}\bm{G}^{-1} = \bm{Z}$.
It is important to note that while different choices of gauge yield the same BCF and thus the same reduced system dynamics (in the absence of simulation errors), the total system-bath Liouvillian differs, leading to different global system-bath dynamics.

The case of a fermionic bath can be introduced similarly.  In particular, we focus on number-conserving fermionic impurity models where the coupling Hamiltonian is given by $\hat{H}_{SB} = \sum_j \hat{a}_j^\dag \hat{B}_j + \hat{B}_j^\dag \hat{a}_j, \: \hat{B}_j = \int d\omega \ g_{j}(\omega) \hat{c}_\omega$ ($\hat{a}_j^\dag$ is the creation operator of the impurity with index $j$, which can involve both spin and orbital degrees of freedom). The number-conserving property of the fermionic impurity model reduces the BCFs to two different BCFs, namely the greater and lesser BCFs: $C^{>B}_{jj'}$ and $C^{<B}_{jj'}$. For any fermionic Gaussian state $\hat{\rho}_B(0)$, $C^{>B}_{jj'}$ and $C^{<B}_{jj'}$ are defined as,
\begin{equation}
    \begin{gathered}
        C^{>B}_{jj'}(t,t') = \tr_B\left[\hat{B}_j(t) \hat{B}^{\dag}_{j'}(t') \hat{\rho}_B(0) \right], \\
        C^{<B}_{jj'}(t,t') = \tr_B\left[\hat{B}^\dag_j(t) \hat{B}_{j'}(t') \hat{\rho}_B(0) \right].
    \end{gathered}
\end{equation}
Corresponding to these two BCFs, we use two different auxiliary baths with an initial reduced density operator $\hat{\rho}_A(0)=\hat{\rho}_{A_1}(0)\otimes \hat{\rho}_{A_2}(0) = \proj{\bm{0}}\otimes\proj{\bm{1}}$. The bath-only Hamiltonian and dissipator are set to be diagonal without loss of generality, as discussed in the bosonic case. The Hamiltonian is $\hat{H}_A = \sum_{k_1}E_{k_1} \hat{d}^\dag_{k_1} \hat{d}_{k_1} + \sum_{k_2} E_{k_2} \hat{d}^\dag_{k_2} \hat{d}_{k_2}$ where $k_1$ and $k_2$ refer to the pseudomode index in each bath, respectively, and the dissipator is set to be 
\begin{equation}
    \begin{gathered}
        \mathcal{D}_{A_1} \bullet = \sum_{k_1} 2\gamma_{k_1} \left(\hat{d}_{k_1} \bullet \hat{d}^{\dag}_{k_1} - \frac{1}{2} \{\hat{d}^{ \dag}_{k_1}  \hat{d}_{k_1}, \bullet \}\right), \\
        \mathcal{D}_{A_2} \bullet = \sum_{k_2} 2\gamma_{k_2} \left(\hat{d}^\dag_{k_2} \bullet \hat{d}_{k_2} - \frac{1}{2} \{\hat{d}_{k_2}  \hat{d}^\dag_{k_2}, \bullet \}\right),
    \end{gathered}
\end{equation}
which satisfies $\mathcal{D}_{A_i} \hat{\rho}_{A_i}(0) = 0$ for both $i=1,2$. The system-bath Hamiltonian coupling is $\hat{H}_{SA} = \hat{a}_j^\dag \hat{A}_j + \hat{A}_j^\dag \hat{a}_j$, $\hat{A}_j = \sum_{k_1} (\bm{V}_1)_{jk_1} \hat{d}_{k_1} + \sum_{k_2} (\bm{V}_2)_{jk_2} \hat{d}_{k_2}$ and the system-bath Lindblad coupling is set to be
\begin{equation}
    \begin{gathered}
        \mathcal{D}_{SA_1} \bullet = \sum_j \hat{a}_j \bullet \hat{L}_{1j}^\dag + \hat{L}_{1j} \bullet \hat{a}_j^\dag - \frac{1}{2} \{\hat{L}_{1j}^\dag \hat{a}_j+\hat{a}_j^\dag \hat{L}_{1j},\bullet \}, \\
        \mathcal{D}_{SA_2} \bullet = \sum_j \hat{a}^\dag_j \bullet \hat{L}_{2j} + \hat{L}^\dag_{2j} \bullet \hat{a}_j - \frac{1}{2} \{\hat{L}_{2j} \hat{a}^\dag_j+\hat{a}_j \hat{L}^\dag_{2j},\bullet \},
    \end{gathered}
\end{equation}
where $\hat{L}_{1j} = \sum_{k_1} 2(\bm{M}_1)_{jk_1} \hat{d}_{k_1}$ and $\hat{L}_{2j} = \sum_{k_2} 2(\bm{M}_2)_{jk_2} \hat{d}_{k_2} $. The BCF with this ansatz becomes
\begin{equation}
    \begin{gathered}
        \bm{C}^{>A}(\Delta t) = (\bm{V}_{1}-i\bm{M}_{1}) e^{-\bm{Z}_{1}\Delta t} (\bm{V}_{1}+i\bm{M}_{1})^\dag, \\
        \bm{C}^{<A}(\Delta t) = (\bm{V}_{2}-i\bm{M}_{2})^* e^{-\bm{Z}_{2}\Delta t} (\bm{V}_{2}+i\bm{M}_{2})^T, 
    \end{gathered}
\end{equation}
where $\bm{Z}_{1}$ and $\bm{Z}_{2}$ are diagonal matrices with elements $z_{k_1} = \gamma_{k_1} + iE_{k_1}$ and $z_{k_2} = \gamma_{k_2}-iE_{k_2}$. We note that the baths $A_1$ and $A_2$ only contribute to the BCF $\bm{C}^{>A}$ and $\bm{C}^{<A}$, respectively. 
We remark that the BCF from the fermionic bath is also expressed as a complex weighted sum of complex exponential functions and has the same gauge redundancy as the bosonic bath.

It is instructive to consider a model with a single coupling term (with only $j=1$). In this case, the BCF simplifies to a scalar function without indices:
\begin{equation}
    C^A(\Delta t) = \sum_k w_k e^{-z_k \Delta t},
\end{equation}
where $w_k = (V_k - iM_k)(V_k^* - i M_k^*)$. This expression is a sum of \emph{complex} exponential functions with \emph{complex} weights. In contrast, the conventional pseudomode description corresponds to $M_k = 0$, resulting in \emph{positive} weights $w_k = |V_k|^2$.
Physically, this exactly represents the BCF from a positive weighted sum of the Lorentzian spectrum, and thus, we also refer to the conventional pseudomode as a Lorentzian pseudomode. We see, however, that the quasi-Lindblad pseudomode provides for a more flexible and general representation, whose influence on the compactness of the pseudomode description we will examine in the numerics below.

In addition to the enhanced flexibility, the use of complex exponential expressions also simplifies the numerical fitting of the BCF. In this work, we utilize the ESPRIT algorithm~\cite{PaulrajRoyKailath1986, Fannjiang2020}, which has previously been reported as one of the most competitive methods for BCF fitting~\cite{Takahashi2024}, to determine the complex exponents, $z_k$.
We then perform a least-squares fitting of the complex weights $w_k$.
We note that this simple unconstrained least-squares approach is possible because we do not restrict $w_k \geq 0$ as would be required in the conventional pseudomode approach. Once $w_k$ is determined, we choose the gauge for $V_k$ and $M_k$. In this limit, the gauge choice is parameterized as $V_k - i M_k = \kappa_k \sqrt{w_k}, V_k + i M_k = \kappa_k^{*-1} \sqrt{w^*_k}$ where $\kappa_k \in \mathbb{C}$.
Here, we follow a simple heuristic of minimizing the magnitude of $M_k$, as this term determines the extent of CP condition violation. We describe the result for the bosonic bath expression in Eq.~\ref{eq:Corrt_multisite} for $\bm{V}$ and $\bm{M}$, and its adaptation to the fermionic BCF expression is straightforward. 
The solution of $V_k$ and $M_k$ with minimum $|M_k|$  is then given by $V_k - iM_k = \sqrt{w_k}$ with $V_k, M_k \in \mathbb{R}$, ($\kappa_k = 1$) which we provide its proof in the SM~\cite{supp_mat}.  We will examine other choices of $V_k$ and $M_k$ and their effect on the stability of the system-bath dynamics in Sec.~\ref{sec:sec4}. 

There are several approaches to relax the CPTP condition in pseudomode representations. One approach involves a non-Hermitian pseudomode formulation~\cite{Lambert2019, Pleasance2020, Cirio2023, menczel2024nonhermitian}, where the system-bath coupling is given by a non-Hermitian Hamiltonian (without Lindblad coupling). This formulation results in real weights, $w_k \in \mathbb{R}$~\cite{Lambert2019, menczel2024nonhermitian}, or complex weights from a pair of pseudomodes~\cite{menczel2024nonhermitian}. Another approach, presented in \cite{xu2023universal}, is a Lindblad-like pseudomode formulation derived from the HEOM with a similarity transformation. While this approach allows for complex weights,
it is important to emphasize that the Lindblad-like equation in \cite{xu2023universal} represents one gauge choice for $\kappa_k$ \footnote{The corresponding $\kappa_k$ for the Lindblad-like equation in \cite{xu2023universal} is $\kappa_k = \sqrt{\operatorname{Re}[w_k]/w_k}$.} within our quasi-Lindblad pseudomode framework.

\begin{figure*}[t]
    \centering
    \includegraphics[width=0.95\textwidth]{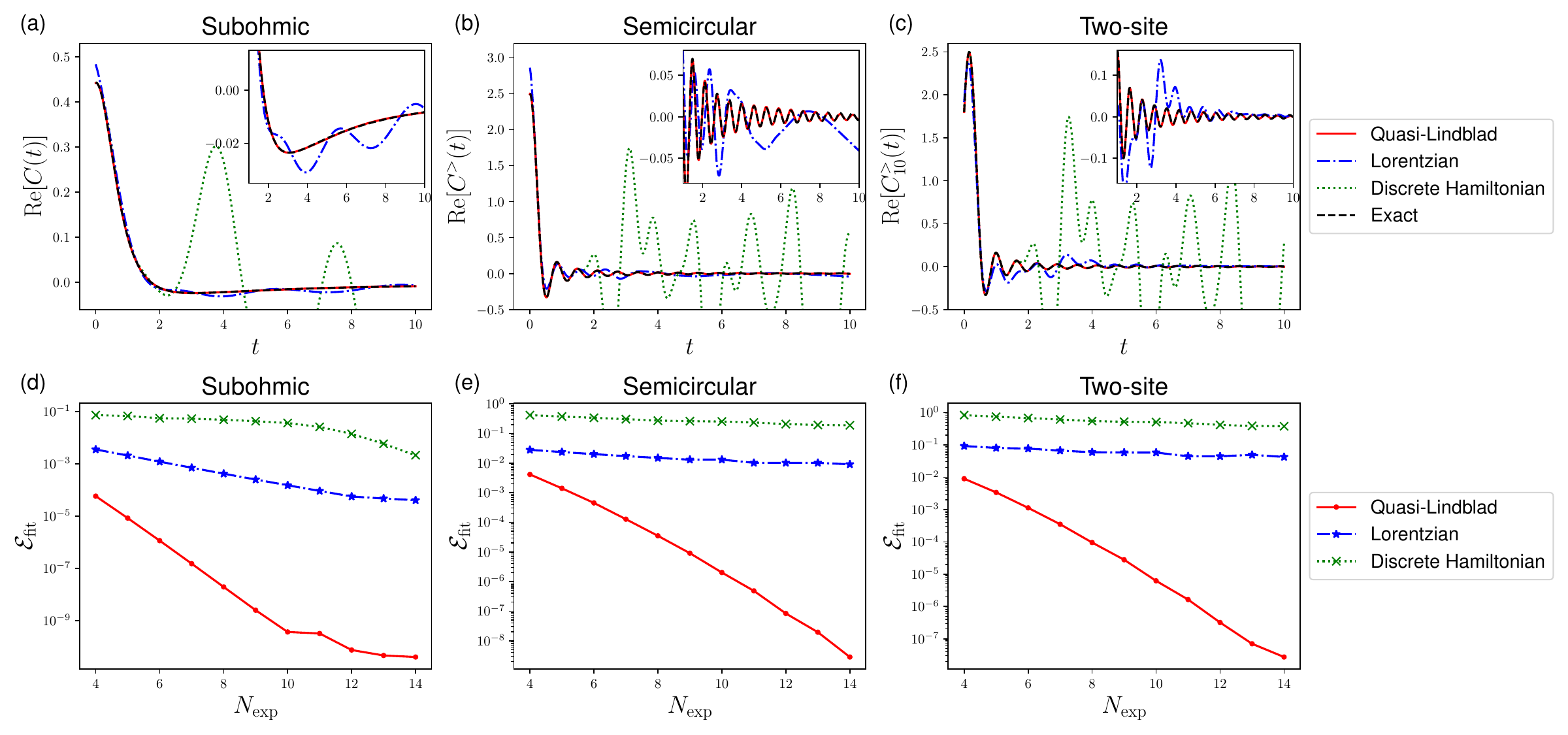}
    \caption{Exponential function fitting of BCF for (a,d) subohmic, (b,e) semicircular, (c,f) and two-site impurity spectral densities. The real part of the fitted BCF is shown in (a,b,c) from the complex-weighted fit to complex exponentials (red solid, quasi-Lindblad pseudomode formulation in this work), positive-weighted fit to complex exponentials (blue dash-dotted, conventional Lorentzian pseudomode formulation), and positive-weighted fit to real exponentials (green dotted, discrete Hamiltonian) compared to the exact BCF (black dashed). The inset shows the zoomed-in data at $t>1$ (only the results from complex exponential fits). The number of exponential functions, $N_{\exp}$, used for the fitting was $N_{\exp}=4$ and $6$ for (a) and (b,c), respectively. (d,e,f) Fitting errors, $\mathcal{E}_{\text{fit}}$, as a function of $N_{\exp}$ from the fitting with complex weights (red dots), positive weights (blue stars), and discrete Hamiltonian (green crosses). 
    }
    \label{fig:fitting}
\end{figure*}

For general couplings with multiple $j$, a similar strategy can be adopted. In this case, we first find a sum of exponential functions with a matrix-valued weight $\bm{C}^A(\Delta t) \approx \sum_k \bm{W}_k e^{-z_k \Delta t}$. 
We determine $z_k$ by fitting a scalar quantity, such as the sum of the BCF entries, $\sum_{jj'} C^A_{jj'}(\Delta t)\approx \sum_k \left( \sum_{jj'}  (\bm{W}_k)_{jj'}\right) e^{-z_k \Delta t}$, or the trace $\sum_{j} C^A_{jj}(\Delta t)\approx \sum_k \left( \sum_{j}  (\bm{W}_k)_{jj}\right) e^{-z_k \Delta t}$. The matrix-valued complex weights $\bm{W}_k$ are then obtained through the least-squares fitting. 

Finding $\bm{V}$ and $\bm{M}$ from $\bm{W}_k$ involves additional matrix decompositions. However, the representation from Eq.~\ref{eq:Corrt_multisite} has a rank-1 matrix factorization form, $(\bm{W}_k)_{jj'}=(\bm{V}-i\bm{M})_{jk} (\bm{V}+i\bm{M})^*_{j'k}$, and $\bm{W}_k$ from the least-squares fitting is generally not a rank-1 matrix, but of rank $N_S$, after denoting the range of $j$ as $N_S$. To address this, we represent the rank-$N_S$ $\bm{W}_k$ by indexing the pseudomode with $n=(k,s)$ where the index $s$ is an additional index with a range of $N_S$. Then, we assign the diagonal elements of $\bm{Z}$ as $z_n = z_{(k,s)}=z_k$. This gives the following expression: 
\begin{equation}
    C^A_{jj'}(\Delta t) = \sum_{n=(k,s)} (\bm{V} - i \bm{M})_{jn} (\bm{V} + i \bm{M})^*_{j'n} e^{-z_k \Delta t},
\end{equation}
where $(\bm{W}_k)_{jj'} = \sum_s (\bm{V} - i \bm{M})_{jn} (\bm{V} + i \bm{M})^*_{j'n}$. Thus, given $N_{\exp}$ exponents $z_k$ and $N_S$ coupling operators, $N_S N_{\exp}$ pseudomodes are used in the BCF fitting. With this setting, we decompose $\bm{W}_k$ with a singular value decomposition (SVD), $\bm{W}_k = \bm{U}_k \bm{\Sigma}_k \widetilde{\bm{U}}_k^\dag$, where $\bm{\Sigma}_k$ is a diagonal matrix. One possible choice for $\bm{V}$ and $\bm{M}$ is $(\bm{V}-i\bm{M})_{jn} = (\bm{U}_k \bm{\Sigma}^{1/2}_k)_{js}, (\bm{V} + i \bm{M})_{jn} = (\widetilde{\bm{U}}_k {\bm{\Sigma}_k}^{1/2})_{js}$, which reduces to the gauge choice above for the single coupling limit~\footnote{The SVD representation has a phase redundancy in $\bm{U}_k$, $\bm{U}_k \rightarrow \bm{U}_k e^{i \bm{\Theta}_k}$, and also in $\bm{M}_k$, $\bm{M}_k \rightarrow \bm{M}_k e^{i \bm{\Theta}_k}$, where $\bm{\Theta}_k$ is a real diagonal matrix. However, it does not affect the overall norm of $\bm{M}_k$.}.

\section{Numerical results}

\subsection{Fitting the bath correlation function}

The quasi-Lindblad pseudomode formulation provides the BCF as a complex weighted sum of complex exponential functions. 
This type of BCF representation is widely used, for example, in HEOM~\cite{Stockburger2022, DanShi2023, xu2023universal, ChenWangZhengetal2022, Takahashi2024} and tensor network influence functionals~\cite{vilkoviskiy2023bound, guo2024efficient}. It is known in many applications to show a fast exponential convergence rate with the number of pseudomodes. In this section, we focus on comparing the performance of BCF fitting using the quasi-Lindblad ansatz to that achieved with the conventional Lorentzian pseudomode ansatz and a discrete Hamiltonian ansatz obtained from direct discretization.

Given a spectral density $J_{jj'}(\omega) = g_{j}(\omega) g_{j'}^*(\omega)$, we will fit the associated zero-temperature BCF~\cite{breuer2007open}, 
\begin{equation}
    C_{jj'}(t) = \int J_{jj'}(\omega) e^{-i\omega t} d\omega,
\end{equation}
using the three different ansatz. We examine the performance of these ansatz for three spectral densities, as illustrated in Fig.~\ref{fig:fitting}. The first case is a subohmic spectral density, 
\begin{equation}\label{eq:subohmic}
    J(\omega) = \frac{\alpha}{2} \omega_c^{1-s} \omega^s e^{-\omega/\omega_c}, \quad \omega \in [0,\infty),
\end{equation}
the second, a semicircular spectral density, 
\begin{equation}\label{eq:semicircular}
    J(\omega) = \frac{\Gamma}{\pi}\sqrt{1 - \frac{\omega^2}{W^2}}, \quad \omega \in [0,W],
\end{equation}
and the third corresponds to a two-site spectral density,
\begin{equation}\label{eq:spectraldensity_twosite}
    J_{jj'}(\omega) = \begin{pmatrix}
        1 & r(\omega) \\ r^*(\omega) & 1
        \end{pmatrix} J(\omega),
\end{equation}
where $r(\omega) = \exp(-i \omega /2W)$ and $J(\omega)$ is the semicircular spectral density specified above.
For the subohmic case, we use $s = 0.5$ and $\alpha = \omega_c=1$, and for the semicircular case, we set $\Gamma=1$ and $W=10\Gamma$. 

Figs.~\ref{fig:fitting} (a--c) compare the fitted BCF to the exact expressions shown by the black dashed lines. Three different fits are compared: (1) complex exponential function fits with complex weights (red solid), (2) with positive (or PSD for the multi-site case) weights (blue dash-dotted), and (3) a discrete Hamiltonian (green dotted).
Fitting scheme (2) represents the Lorentzian pseudomode.
Here, the parameters are optimized through numerical minimization of the fitting error. For fitting (3), we used a discretization scheme based on Gaussian quadrature using Legendre polynomials~\cite{Vega2015}. Details of these fitting schemes are in the SM~\cite{supp_mat}. 
The fits to the quasi-Lindblad ansatz with $N_{\exp}=4$ ($N_{\exp}=6$) for Fig.~\ref{fig:fitting}a (Fig.~\ref{fig:fitting}b and Fig.~\ref{fig:fitting}c), respectively, show high accuracy already with a small number of exponentials. 
In Fig.~\ref{fig:fitting}d, Fig.~\ref{fig:fitting}e, and \ref{fig:fitting}f, the maximum fitting error as a function of $N_{\text{exp}}$ is illustrated. This comparison shows the clear advantage of the quasi-Lindblad pseudomode compared to the Lorentzian pseudomode and direct discretization schemes and illustrates an exponential decrease of error with respect to $N_{\text{exp}}$ in both cases. 

\begin{figure}[t]
    \centering
    \includegraphics[width=0.8\columnwidth]{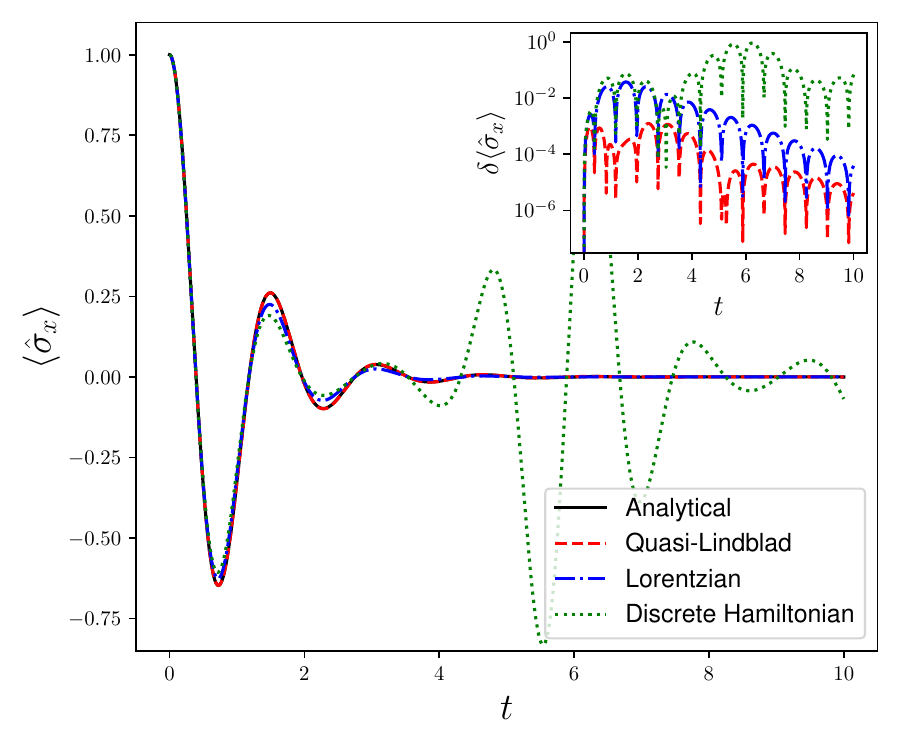}
    \caption{Time evolution of coherence $\langle \hat{\sigma}_x \rangle$ from the quench dynamics of the spin-boson model. The analytical result from the continuous bath with $s=0.5$ subohmic spectral density is compared to results from a two-mode discretized bath using quasi-Lindblad pseudomodes, Lorentzian pseudomodes, and a discrete Hamiltonian. \textbf{(inset)} Errors of coherence compared to the analytical result.}
    \label{fig:quench_spinboson}
\end{figure}

\subsection{Quench dynamics simulation}\label{sec:sec3b}

We then test the accuracy of the quasi-Lindblad pseudomode formulation in simulating the quench dynamics of a spin-boson model and a fermionic impurity model. First, we demonstrate the effectiveness of this approach in the spin-boson model. We choose the Hamiltonian $\hat{H}_S = \omega_0 \hat{\sigma}_z/2$ with $\omega_0=4$ and $\hat{S}_{j=1} = \hat{\sigma}_z$ and the initial system-reduced density operator $\hat{\rho}_S(0) = \proj{+}$, with $\ket{+}=1/\sqrt{2} (\ket{0}+\ket{1})$. While the analytical solution for this setup is known~\cite{breuer2007open, LeggettRevModPhys.59.1}, obtaining accurate numerical results remains nontrivial~\cite{Somoza2019, Mascherpa2020, Cygorek2024prx}, 
particularly due to the requirement for a precise description of the BCF.

We choose the $s=0.5$ subohmic spectral density as above ($\alpha=\omega_c=1$) and consider a zero temperature bath initial condition. The BCF is fitted using two bosonic modes, and we compare the quasi-Lindblad pseudomode, Lorentzian pseudomode, and a discrete Hamiltonian from Gaussian quadrature. Fig.~\ref{fig:quench_spinboson} shows the time evolution of the coherence $\langle \hat{\sigma}_x(t) \rangle$ using these three different bath discretizations. We see that the time evolution from the discrete Hamiltonian shows unphysical oscillations in the long-time limit, while the Lorentzian pseudomode qualitatively captures the dissipative dynamics but at a much lower accuracy than the quasi-Lindblad pseudomode representation.

\begin{figure}[t]
    \centering
    \includegraphics[width=1.0\columnwidth]{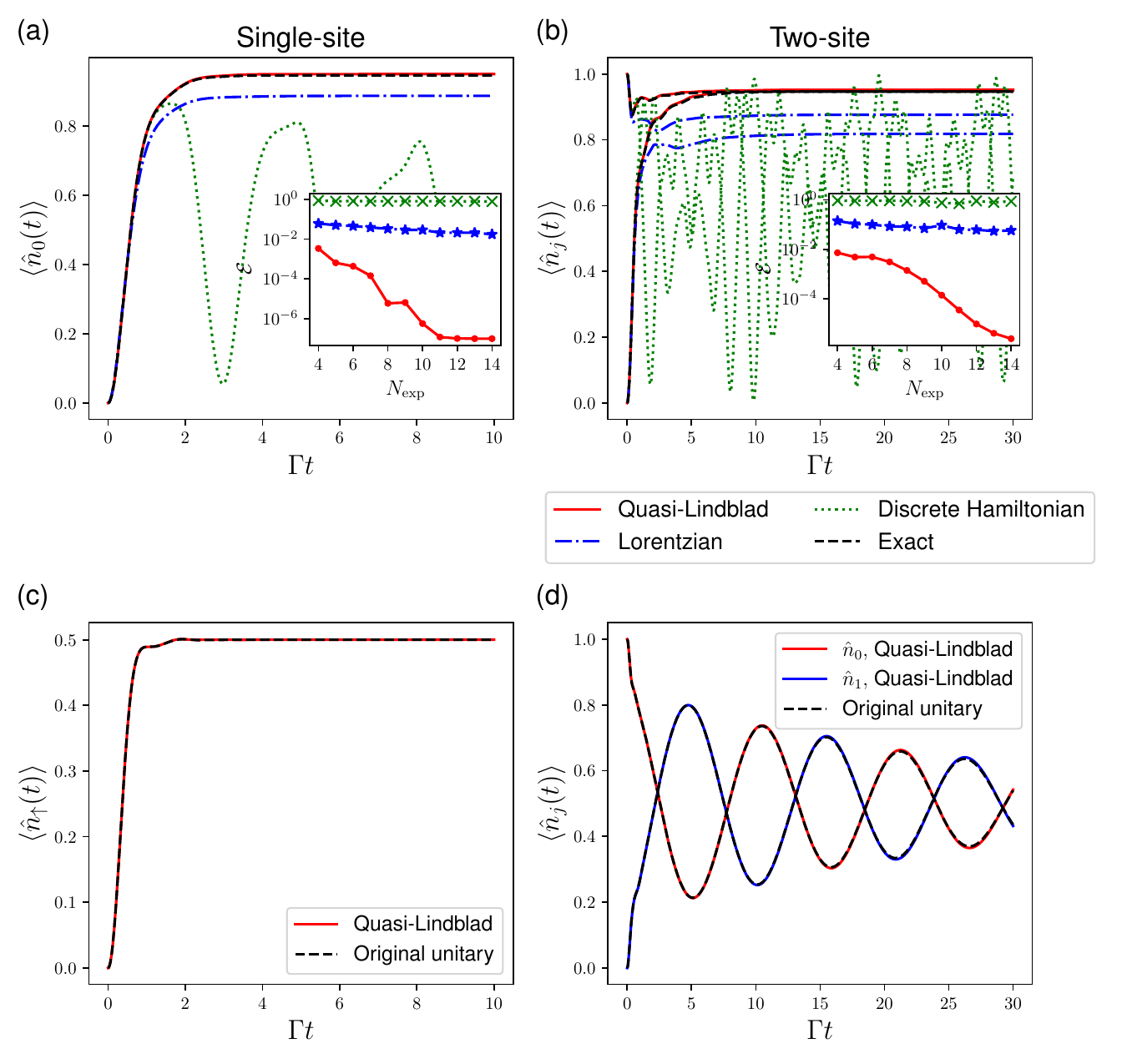}
    \caption{Quench dynamics of spinless (a) single-site and (b) two-site impurity models with $\hat{H}_S=\varepsilon \sum_j \hat{n}_j $ with $\varepsilon= -4\Gamma$, (c) spinful SIAM and (d) spinless two-site impurity model with $\hat{H}_S= U \hat{n}_0 \hat{n}_1 + \varepsilon \sum_j \hat{n}_j $ with $U=-2\varepsilon=8\Gamma$.
    (a, b) Time-dependence of impurity occupations, $\langle\hat{n}_j(t)\rangle$, computed from Gaussian calculations from our quasi-Lindblad pseudomode, Lorentzian pseudomode, discrete Hamiltonian, and exact unitary dynamics with $N_{\text{exp}}=4$. \textbf{(inset)} A maximum error of $\langle \hat{n}_j(t)\rangle$ from pseudomode dynamics as a function of $N_{\text{exp}}$ for three different fitting schemes in Fig.~\ref{fig:fitting}. (c, d) $\langle\hat{n}_j(t)\rangle$ of our pseudomode compared to the original unitary dynamics with large bath discretization, computed from tdDMRG. }
    \label{fig:quench}
\end{figure}

We next study the dynamics of the spinless noninteracting single-site and two-site models with $\hat{H}_S= \varepsilon \sum_j \hat{n}_j$ ($\hat{n}_j = \hat{a}_j^\dag \hat{a}_j$). Here, we can carry out efficient Gaussian calculations, which we do for both a continuous unitary bath and for the Lindblad pseudomode dynamics~\cite{Lotem2020, Barthel_2022}. We choose the semicircular spectral density in Eq.~\ref{eq:semicircular} and Eq.~\ref{eq:spectraldensity_twosite} with $W=10 \Gamma$, but with $\omega \in [-W,W]$, $\varepsilon = -4\Gamma$, and consider the initial bath as a fully occupied (empty) state for $\omega <0$ ($\omega >0$). We fit the two BCFs, $C^{>}_{jj'}(t)$ and $C^{<}_{jj'}(t)$, with $N_{\text{exp}}$ exponential functions each and represent the bath with in total $2N_{\text{exp}}$ ($4N_{\text{exp}}$) pseudomodes for the single-site (two-site) impurity model, respectively. 
The initial impurity state is assumed to be an empty state, $\hat{\rho}_S(0)= \proj{0}$, for the single-site model, and $\hat{\rho}_S(0) = \proj{10}$ for the two-site model. Fig.~\ref{fig:quench}a and \ref{fig:quench}b show the time-dependence of the impurity occupation, $\langle \hat{n}_j(t) \rangle$, with $N_{\text{exp}}=4$ for the exact dynamics, the quasi-Lindblad pseudomode, Lorentzian pseudomode, and discrete Hamiltonian dynamics. These results clearly show the improved accuracy of the quasi-Lindblad pseudomode relative to the Lorentzian pseudomode, which reaches the wrong steady state.
The inset in Fig.~\ref{fig:quench}a and \ref{fig:quench}b shows the maximum error of the impurity occupation as a function of $N_{\text{exp}}$, which also shows an exponential convergence in $N_{\text{exp}}$.

\begin{figure*}
    \centering
    \includegraphics[width=0.9\textwidth]{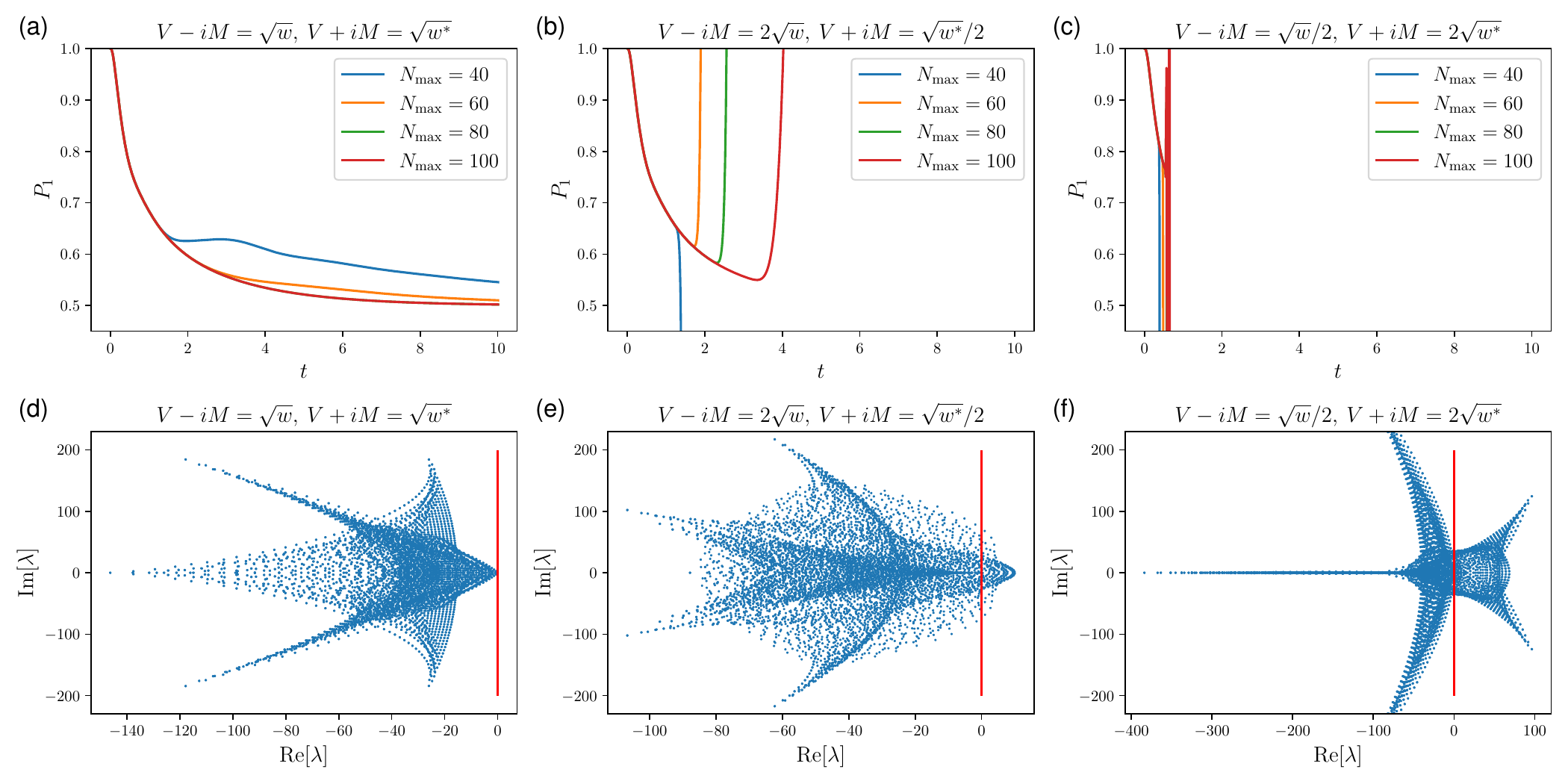}
    \caption{Stability behavior of the spin-boson model for different gauge choices (for the same BCF) in the quasi-Lindblad formulation (a,d) $\kappa=1$, (b,e) $\kappa=2$, and (c,f) $\kappa=0.5$, (details in the main text). (a,b,c) Time evolution of the excited state occupation, $P_1(t) =  \bra{1} \hat{\rho}_S(t) \ket{1}$. (d,e,f) Real and imaginary parts of eigenvalues $\lambda$ of Liouvillian $\mathcal{L} = \mathcal{H} + \mathcal{D}$ at $N_{\max}=40$. The red vertical line marks the point where the real part of the eigenvalue is zero. A positive real part in the eigenvalues implies instability. 
    } 
    \label{fig:instability}
\end{figure*}

Next, we carry out a test with interacting models - a spinful single-site impurity model, known as the single-impurity Anderson model (SIAM), with $\hat{H}_S = U \hat{n}_{\uparrow} \hat{n}_{\downarrow} + \varepsilon ( \hat{n}_{\uparrow} + \hat{n}_{\downarrow})$, and a spinless two-site impurity model with $\hat{H}_S = U \hat{n}_{0} \hat{n}_{1} + \varepsilon ( \hat{n}_{0} + \hat{n}_{1})$ - and choose $U = -2\varepsilon =8\Gamma$. 
For the SIAM model, we assume that there is the same bath for each spin with a semicircular spectral density without any coupling between the baths. The initial bath state is assumed to be the same as in the noninteracting case. As a reference for the exact unitary dynamics, we use a large bath discretization obtained from Gaussian quadrature with Legendre polynomials~\cite{Vega2015} - 80 (200) bath modes for the single-site (two-site) impurity model, respectively. For both the unitary reference and pseudomode Lindblad dynamics, we use the time-dependent density matrix renormalization group (tdDMRG) to propagate the dynamics~\cite{Feiguin2004, PAECKEL2019167998} with bond dimensions of up to 250, using the Block2~\cite{Zhai2021, Zhai2023} package. 
Fig.~\ref{fig:quench}c and \ref{fig:quench}d show the time-dependence of $\langle \hat{n}_{j}(t) \rangle$ ($j=\uparrow$ for the SIAM, $j=0,1$ for the two-site impurity). The quasi-Lindblad ansatz was constructed from $N_{\text{exp}}=4$ ($2N_{\text{exp}}=8$ pseudomodes per each spin) for SIAM and $N_{\text{exp}}=8$ ($4N_{\text{exp}}=32$ pseudomodes) for the two-site impurity model. We see that the quasi-Lindblad pseudomode accurately reproduces the original unitary dynamics, including the long-lived oscillation in the two-site impurity model.

\section{Stability of Quasi-Lindblad dynamics}\label{sec:sec4}

We now discuss the stability of quasi-Lindblad global dynamics. As mentioned above, the quasi-Lindblad master equation relaxes the CP condition on the Lindblad dissipator forms, and this implies that the dissipator for the quasi-Lindbladian, $\mathcal{D} = \mathcal{D}_A + \mathcal{D}_{SA}$, always has an eigenvalue with a positive real part. However, we observe that adding the Hamiltonian part of the Liouvillian, $\mathcal{H}\bullet=-i[\hat{H}_{\aux},\bullet]$, can induce a stable dynamics (i.e., the real parts of all eigenvalues of $\mathcal{L} = \mathcal{H}+\mathcal{D}$ are nonpositive). We call this Hamiltonian-induced stability. For some intuition, we can consider the Trotter decomposition of the time evolution operator $e^{\mathcal{L} t} = \lim_{N \rightarrow \infty} \left(e^{\mathcal{H}t/N} e^{\mathcal{D}t/N} \right)^N$. Although the time evolution $e^{\mathcal{D}t/N}$ contains both dissipative and divergent components, the Hamiltonian part $e^{\mathcal{H}t/N}$ can mix these different components so that the overall dynamics \textit{may} become stable. 

As a concrete first example, consider the following quasi-Lindblad dynamics for a single spin
\begin{equation}\label{eqn:quasi_bloch}
\partial_t \hat{\rho}=-i[\hat{H},\hat{\rho}]+\gamma_x\mathcal{L}_{\hat{\sigma}_x}(\rho)-\gamma_z\mathcal{L}_{\hat{\sigma}_z}(\hat{\rho}).
\end{equation}
Here $\mathcal{L}_{\hat{\sigma}_\alpha}\bullet=\hat{\sigma}_\alpha \bullet \hat{\sigma}_\alpha- \bullet $ where $\hat{\sigma}_\alpha$ is a Pauli matrix. Without the Hamiltonian part, the dynamics is clearly unstable for any $\gamma_x,\gamma_z>0$. However, if we choose $\hat{H}=\nu \hat{\sigma}_y$ and solve Eq.~\eqref{eqn:quasi_bloch} using the Bloch sphere representation $\hat{\rho}=\frac12(\hat{I}+{\bm a}\cdot \bm{\hat{ \sigma}})$, we can write down a set of closed equations for the coefficients ${\bm a}(t)$ as
\begin{equation}
\partial_t {\bm a}(t)=2\begin{pmatrix}
\gamma_z & 0 & \nu \\
0 & -\gamma_x+\gamma_z & 0\\
-\nu & 0 & -\gamma_x 
\end{pmatrix} 
{\bm a}(t).
\end{equation}
Therefore, when $\gamma_x>\gamma_z>0, \abs{\nu}^2>\gamma_x\gamma_z$, all eigenvalues have nonpositive real parts, and Eq.~\eqref{eqn:quasi_bloch} has a unique fixed point which is the maximally mixed state. Note that the Hamiltonian-induced stability is contingent upon both the form of the Hamiltonian (in this case, choosing $\hat{H} \propto \hat{\sigma}_y$) and the parameter choices. This dependency also persists in the more complex examples discussed below.
A more general formulation of the Hamiltonian-induced stability may be captured by the hypocoercivity theory, which was originally formulated in the context of classical kinetic theory~\cite{Villani2007} and has recently been applied in the context of open quantum systems described by Lindblad equations~\cite{fang2024mixingtimeopenquantum}.

Next, we go beyond single spin dynamics and illustrate the form of Hamiltonian-induced stability arising in a noninteracting fermionic impurity model, assuming a system-only Hamiltonian $\hat{H}_S = \sum_{jj'} (\bm{H}_S)_{jj'} \hat{a}_j^\dag \hat{a}_{j'}$. 
The noninteracting nature of this fermionic model allows us to describe its dynamics using a Gaussian fermionic formalism, where we formulate the equation of motion in terms of a one-particle reduced density matrix (1-RDM) $P_{pq}(t) = \tr[\hat{c}_q^\dag \hat{c}_p \hat{\rho}_{SA}(t)]$. The equation of motion for $P_{pq}(t)$
is given by the continuous-time differential Lyapunov equation~\cite{Lotem2020, Barthel_2022}:
\begin{equation}
    \partial_t \bm{P} = \bm{X}\bm{P} + \bm{P}\bm{X}^\dag + \bm{Y},
    \label{eq:lyapunov_rdm}
\end{equation}
where
\begin{equation}
    \begin{gathered}
        \bm{X} = \left(
        \begin{array}{ccc}
           -i \boldsymbol{H}_{S}  & -i\bm{V}_1-\bm{M}_1 &  -i\bm{V}_2-\bm{M}_2 \\
           -i\bm{V}_1^\dag - \bm{M}_1^\dag  & -\bm{Z}_1 & \bm{0} \\
            -i\bm{V}_2^\dag - \bm{M}_2^\dag & \bm{0} & -\bm{Z}_2
        \end{array}
        \right), \\
        \bm{Y} = \left(
        \begin{array}{ccc}
           \bm{0}  & \bm{0} &  2\bm{M}_2 \\
           \bm{0}  & \bm{0} & \bm{0} \\
            2\bm{M}_2^\dag & \bm{0} & 2 \bm{\Gamma}_2
        \end{array}
    \right).
    \end{gathered}
\end{equation}
The Lyapunov equation in Eq.~\eqref{eq:lyapunov_rdm} is asymptotically stable if and only if the real parts of all eigenvalues of $\bm{X}$ are strictly negative~\cite{Simoncini2016}.  For simplicity, we now assume a single-site system, i.e., $\hat{H}_S = h \hat{a}^\dag \hat{a}$ where $h \in\mathbb R$, and $\bm{X}$ is
\begin{equation}
    \bm{X} = \left( 
    \begin{array}{cc}
         -i h &  -i\bm{V}-\bm{M} \\
         -i\bm{V}^\dag - \bm{M}^\dag & -\bm{Z}
    \end{array}
    \right),
    \label{eq:single_site_X}
\end{equation}
where we do not separate baths 1 and 2 since there is no difference in the expression for $\bm{X}$. 
In this setting, and assuming without loss of generality,  transformation to a diagonal $\mathbf{Z}$ with elements $z_k = \gamma_k + iE_k$ (see discussion around Eq.~(\ref{eq:Corrt_multisite})) we can prove that when $\|\bm{M}\|<\frac{1}{2}\min\gamma_{k}$, if the magnitude of the Hamiltonian term $\|\bm{V}\|$ is sufficiently large, then the dynamics in Eq.~\ref{eq:lyapunov_rdm} with $\bm{X}$ being Eq.~\ref{eq:single_site_X} is asymptotically stable (see the proof in the SM~\cite{supp_mat}).

The noninteracting fermionic impurity model has a special feature that the eigenvalue spectrum of $\bm{X}$, which determines the asymptotic stability of dynamics, is dependent on the weight $w_k = (V_k - i M_k)(V_k + i M_k)^*$, but not on the gauge choice for each $V_k$ and $M_k$. 
In contrast, in non-integrable systems, such as the ones studied earlier in this work, we observe different stability behavior depending on the gauge degrees of freedom. Take a single-mode spin-boson problem, for instance. In HEOM simulations, this problem has been observed to have instabilities~\cite{Krug2023} after a bosonic Hilbert space truncation. The problem with the reported instability is defined by $\hat{H}_S = \Delta  \hat{\sigma}_x/2$, $\hat{S}_{j=1}=\hat{\sigma}_z$, and the BCF $C(t) = we^{-zt}$, with $\Delta = 4.0$, $w=50.0-2.5i$, and $z=1.0$, and the initial reduced density operator is $\hat{\rho}_S(0) = \proj{1}$. The Hamiltonian and quasi-Lindblad parameters, $V, M \in \mathbb{C}$, satisfy the constraint $(V - iM)(V + iM)^* = w$. We vary $V$ and $M$ through the parameterization $V - iM = \kappa \sqrt{w}, V + iM = \kappa^{-1} \sqrt{w^*}$. Fig.~\ref{fig:instability} shows the time evolution of the excited state occupation, $P_1(t) = \bra{1} \hat{\rho}_S(t) \ket{1}$ at $\kappa=1$, $\kappa=2$, and $\kappa=0.5$. We recall that $\kappa=1$ is the parameter that minimizes $|M|$. 
We impose a Hilbert space truncation by limiting the maximum number of bosonic excitations, $n \leq N_{\max}$. At $\kappa=1$, the results converge with increasing $N_{\max}$ without any propagation instabilities, whereas at $\kappa=2$ and $\kappa=0.5$, the results show instabilities in the long time limit, similar to the HEOM results in \cite{Krug2023}.

Figs.~\ref{fig:instability} (d--f) show the eigenvalues $\lambda$ of the Liouvillian $\mathcal{L}= \mathcal{H}+\mathcal{D}$ with $N_{\max}=40$. The maximum real part of the eigenvalues is 0 at $\kappa=1$, but it is positive at $\kappa=2$ and $\kappa=0.5$, which shows the origin of the differing stability behavior. In Fig.~\ref{fig:instability_phase}, the maximum real part of the eigenvalues is plotted between $\kappa=1$ and $\kappa=2$ (Fig.~\ref{fig:instability_phase}a) and between $1/\kappa=1$ and $1/\kappa=2$ (Fig.~\ref{fig:instability_phase}b). This clearly shows a transition from stable to unstable dynamics at $\kappa > 1.3$ and $1/\kappa>1.2$.

\begin{figure}[t]
    \centering
    \includegraphics[width=1.0\columnwidth]{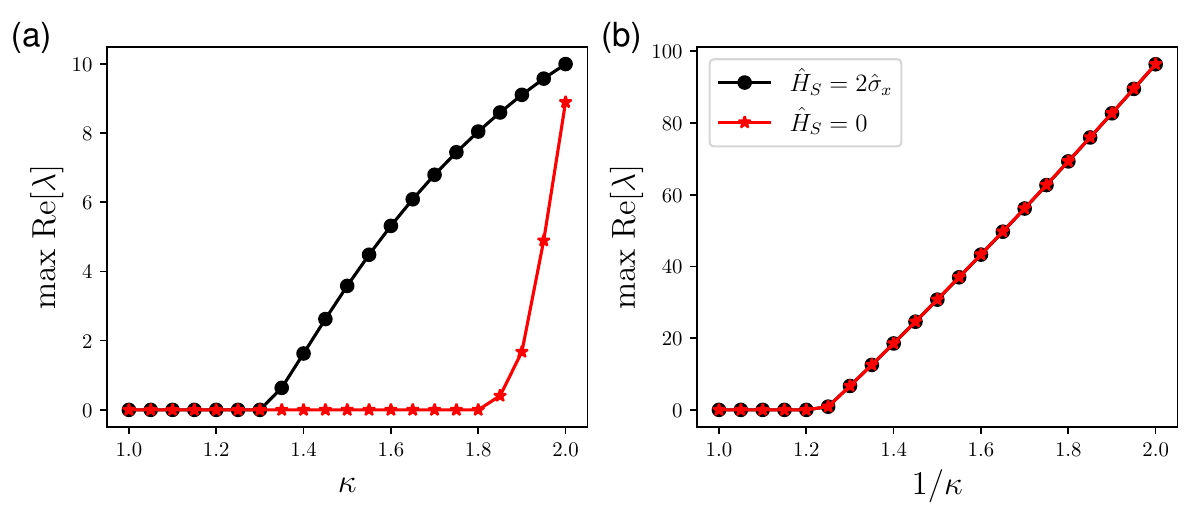}
    \caption{The maximum real parts of eigenvalues of $\mathcal{L}$ as a function of $\kappa$ with $\hat{H}_S = 2 \hat{\sigma}_x$ (black dots) and $\hat{H}_S = 0$ (red stars). $\kappa=1$ is the recommended choice in this work and is found to be numerically stable.}
    \label{fig:instability_phase}
\end{figure}

This result shows that our choice $\kappa=1$ that minimizes $|M|$ provides a stable solution for this BCF. However, we are not yet aware of a parameterization that can guarantee the stability of the dynamics for an arbitrary BCF, and the verification becomes more challenging in general multi-mode cases.

A practical recipe is to guess $\kappa$ from a related problem where the stability can be easily determined. Here, we propose a feasible procedure for a multi-mode boson model.
In particular, the spin-boson model has an analytical solution when the system Hamiltonian $\hat{H}_S$ commutes with the coupling operator $\hat{S}$. For simplicity, we take $\hat{H}_S=0$, with a single spin-site, and the limit of a single-coupling. Then, the system-reduced density operator in the coupling operator eigenbasis, $\hat{\rho}_S = \ket{s}\!\!\bra{s'}$ and $\hat{S}\ket{s} = s \ket{s}$, the density operator $\hat{\rho} = \hat{\rho}_S \otimes \hat{\rho}_B$ forms an invariant subspace of the Liouvillian $\mathcal{L}$,
\begin{equation}
    \mathcal{L}(\hat{\rho}_S \otimes \hat{\rho}_B) = \hat{\rho}_S \otimes \mathcal{L}_{s,s'}(\hat{\rho}_B),
\end{equation}
where $\mathcal{L}_{s,s'}$ is an effective Liouvillian within the auxiliary bath depending on $\hat{\rho}_S$ ($s$ and $s'$). $\mathcal{L}_{s,s'}$ is explicitly expressed as,
\begin{align}
    \mathcal{L}_{s,s'} \bullet &= \mathcal{D}_A \bullet  -is \hat{A} \bullet +is' \bullet \hat{A} \nonumber \\
    &+\sum_k \left((2s'-s) M_k \hat{d}_k - s M_k^* \hat{d}_k^\dag \right) \bullet \nonumber \\
    & + \sum_k \bullet \left(-s' M_k \hat{d}_k + (2s-s') M_k^* \hat{d}_k^\dag  \right),
\end{align}
with $\hat{A} = \sum_k V_k \hat{d}_k + V_k^* \hat{d}_k^\dag$. We note that if $\mathcal{D}_A$ is given in the diagonal basis (diagonal $\bm{Z}$ in Eq.~\ref{eq:Corrt_multisite}), the Liouvillian becomes a sum of independent Liouvillians, $\mathcal{L}_{s,s'} = \sum_k \mathcal{L}_{s,s',k}$. In this case, we can determine the stability of multi-mode spin-boson problems from single-mode spin-boson calculations with $\mathcal{L}_{s,s',k}$.

Although the above calculation is only applicable in the analytically solvable case, we can imagine that the stability is still related to that under general $\hat{H}_S$ dynamics.
To assess this correlation, we compute the maximum real part of the eigenvalues of $\mathcal{L}$ as a function of $\kappa$ with $\hat{H}_S=0$, and with the other parameters set to be the same as in the previous setup in Fig.~\ref{fig:instability_phase}. This calculation is done by computing the eigenvalues of $\mathcal{L}_{s,s'}$ (single $k$ in this case) with $s,s' = \pm 1$ and taking the maximum value over them. We also observe that all the eigenvalues with positive real parts are from $s=1, s'=-1$, implying that the origin of instability is from the nondiagonal elements of the system density operator. For $\kappa < 1$, the computed eigenvalues match remarkably well between the analytically tractable $\hat{H}_S=0$ dynamics and the case of $\hat{H}_S=2\hat{\sigma}_x$ considered previously. 
For $\kappa > 1$, the eigenvalues overestimate the region of stability in $\kappa$ but nonetheless correctly predicts that a region around $\kappa=1$ is stable.

\section{Conclusions}
We have presented the quasi-Lindblad pseudomode theory, which provides a flexible framework to incorporate a highly compact representation of the bath correlation function (BCF) of an open quantum system. 
Specifically, the BCF in the quasi-Lindblad pseudomode theory takes the form of a sum of complex exponential functions with complex weights, in contrast to positive weights 
in the conventional Lindblad pseudomode theory. By adopting an efficient numerical algorithm for the complex exponential function fitting procedure, we showed that the number of pseudomodes can be significantly reduced, and only a small number of exponential functions ($<10$) is sufficient for both fitting the BCF and describing the associated reduced system dynamics.

The quasi-Lindblad pseudomode theory achieves this representation of the BCF by relaxing the complete positivity (CP) of the global dynamics consisting of the system and the pseudomodes. Although the total system-bath density operator along the dynamics remains trace-preserving throughout the dynamics, it is generally not positive.  Nonetheless, after tracing out the bath, the quasi-Lindblad pseudomode theory can still produce an accurate system-reduced density operator, which can be very close to a positive operator, provided the error in the BCF relative to the original unitary problem is sufficiently small. The dependence of the error of the system-reduced density operator on the BCF has been rigorously analyzed in the context of unitary dynamics for spin-boson systems~\cite{Mascherpa2017, vilkoviskiy2023bound, liu2024errorboundsopenquantum}, and work is in progress towards extending such results for fermionic systems and for the Lindblad and quasi-Lindblad theories~\cite{HuangPark2024open}.

The pseudomode formulation provides an alternative framework to other numerical approaches to open quantum system dynamics, which have also used a complex-weighted sum of exponential functions to represent the BCF, such as the HEOM~\cite{Stockburger2022, DanShi2023, xu2023universal}. An important strength is that once the pseudomodes are determined, there exists a wide variety of numerical methods that can be used to propagate the resulting master equation~\cite{Somoza2019, Lotem2020, Brenes2020, Schwarz2016, Dorda2014, Dorda2015, WernerArrigoni2023, werner2023auxiliary, LuoCirio2023prxquantum}.

A crucial theoretical question raised by the quasi-Lindblad formulation is the stability of the global dynamics.
Despite relaxing the CP condition, we found that the combination of the Hamiltonian and dissipator terms can lead to dynamics which is stable, a phenomenon we refer to as  Hamiltonian-induced stability. 
Further, because the quasi-Lindblad formulation contains a set of gauge degrees of freedom, we can take advantage of them to create a stable global dynamics for a particular system, as we demonstrated in the 
case of the single-mode spin-boson model.
In this example, adjusting the gauge parameters allowed us to resolve an instability issue that is known to occur when using the HEOM method~\cite{Krug2023}. Nonetheless, although such instabilities are a formal possibility within our theory and should be studied further, the widespread use of HEOM in open quantum system simulations also suggests that they may not be relevant to a range of physical studies.

We anticipate that this theoretical development can facilitate accurate simulations of open quantum systems within more complex environments, such as those encountered in photosynthetic systems~\cite{fleming2009}, nonequilibrium Kondo problems~\cite{Hewson1993}, and quantum control~\cite{Nori2023qutip}.

\vspace{1em}
\emph{Acknowledgements.--}  This
work is an equal collaboration between two SciDAC teams ``Real-time dynamics of driven correlated electrons in quantum materials'' and ``Traversing the `death valley' separating short and long times in non-equilibrium quantum dynamical simulations of real materials'', supported by the U.S. Department of Energy, Office of Science, Office of Advanced Scientific Computing Research and Office of Basic Energy Sciences, Scientific Discovery through Advanced Computing (SciDAC) program under Award Number DE--SC0022088 (G.P., C.Y., G.K.C.) and DE--SC0022198 (L.L., C.Y., Y.Z.). The work by Z.H. was supported by the Simons Targeted Grants in
Mathematics and Physical Sciences on Moir\'e Materials Magic.
G.K.C. and L.L. are Simons Investigators.  We thank Zhiyan Ding, Joonho Lee, David Limmer, Vojt\v{e}ch Vl\v{c}ek, Erika Ye, Lexing Ying, and Neill Lambert for helpful discussions.

\vspace{1em}
\emph{Author contributions.--} G.P., Z.H., G.K.C., L.L. conceived the original study.
G.P., Z.H., Y.Z., and L.L. carried out theoretical analysis to support the study. G.P. and Z.H. carried out numerical calculations to support the study. All authors, G.P., Z.H., Y.Z., C.Y., G.K.C., and L.L., discussed the results of the manuscript and contributed to the writing of the manuscript.  

\vspace{1em}
\emph{Note added.--} During submission of this manuscript, a related paper~\cite{Thoenniss2024} appeared, which discusses analytical numerical scaling and provides different numerical fitting schemes.

\bibliography{references} 

\newpage
\setcounter{equation}{0}
\setcounter{figure}{0}
\setcounter{table}{0}
\setcounter{page}{1}
\setcounter{section}{0}
\makeatletter
\renewcommand{\thesection}{SM-\Roman{section}}
\renewcommand{\theequation}{S\arabic{equation}}
\renewcommand{\thefigure}{S\arabic{figure}}
\renewcommand{\bibnumfmt}[1]{[S#1]}

\title{Supplemental Material for ``Quasi-Lindblad pseudomode theory for open quantum systems"}

\begin{CJK*}{UTF8}{mj}
\author{Gunhee Park (박건희)}
\affiliation{Division of Engineering and Applied Science, California Institute of Technology, Pasadena, California 91125, USA}
\author{Zhen Huang}
\affiliation{Department of Mathematics, University of California, Berkeley, California 94720, USA}
\author{Yuanran Zhu}
\affiliation{Applied Mathematics and Computational Research Division, Lawrence Berkeley National Laboratory, Berkeley, California 94720, USA}
\author{Chao Yang}
\affiliation{Applied Mathematics and Computational Research Division, Lawrence Berkeley National Laboratory, Berkeley, California 94720, USA}
\author{Garnet Kin-Lic Chan}
\affiliation{Division of Chemistry and Chemical Engineering, California Institute of Technology, Pasadena, California 91125, USA}
\author{Lin Lin}
\affiliation{Department of Mathematics, University of California, Berkeley, California 94720, USA}
\affiliation{Applied Mathematics and Computational Research Division, Lawrence Berkeley National Laboratory, Berkeley, California 94720, USA}

\maketitle
\end{CJK*}
\onecolumngrid

\section{Proof of the equivalence condition for the bosonic Gaussian bath}

In this section, we provide the derivation of the equivalence condition for the bosonic Gaussian bath based on the bath correlation function (BCF). Our strategy here is different from that in \cite{Tamascelli2018}, which is only applicable to the Lindblad equation and relies on the dilation of the Lindblad system into an infinite dimensional unitary system.

We start from the following system-bath Liouvillian superoperator form introduced in the main text,
\begin{equation}
    \mathcal{L}_{SA} = -i \sum_j \mathcal{S}_j \mathcal{F}_j +i \sum_j \widetilde{\mathcal{S}}_j \widetilde{\mathcal{F}}_j.
\end{equation}
For example, in the case when there is only Hamiltonian coupling, $\hat{H}_{SA} = \sum_j \hat{S}_j \hat{A}_j$, $\mathcal{F}_j\bullet = \hat{A}_j \bullet$ and $\widetilde{\mathcal{F}}_j \bullet = \bullet \hat{A}_j$. In the case of the quasi-Lindblad pseudomode with both Hamiltonian and Lindblad couplings with $\mathcal{D}_{SA} \bullet = \sum_{j} \hat{L}'_{j} \bullet \hat{S}_j + \hat{S}_j \bullet \hat{L}'^\dag_{j} - \frac{1}{2}\{ \hat{S}_j \hat{L}'_{j} + \hat{L}'^\dag_{j}  \hat{S}_j, \bullet  \}$,
\begin{equation}
    \begin{gathered}
        \mathcal{F}_j \bullet = \hat{A}_j \bullet +  \bullet i \hat{L}'^\dag_{j} - \frac{i}{2}  (\hat{L}'_{j} + \hat{L}'^\dag_{j} ) \bullet, \\
        \widetilde{\mathcal{F}}_j \bullet = \bullet \hat{A}_j - i \hat{L}'_{j} \bullet +  \bullet \frac{i}{2} (\hat{L}'_{j} + \hat{L}'^\dag_{j} ).
    \end{gathered}
    \label{smeq:F_superoperator}
\end{equation}
For brevity, we will not explicitly distinguish between the
non-tilde and tilde superoperators, and denote  $\mathcal{L}_{SA} = -i \sum_j \mathcal{S}_j \mathcal{F}_j$, treating $\widetilde{\mathcal{S}}_j = \mathcal{S}_{j_{\max}+j}$ and $\widetilde{\mathcal{F}}_j =- \mathcal{F}_{j_{\max}+j}$. We will return to the explicit tilde superoperator notation when the distinction is necessary.

In the interaction picture, $\hat{\widetilde{\rho}}_{SA}(t) = e^{-(\mathcal{L}_S + \mathcal{L}_A)t} \hat{\rho}_{SA}(t) $, $\mathcal{S}_j(t) = e^{-\mathcal{L}_S t} \mathcal{S}_j e^{\mathcal{L}_S t}$, $\mathcal{F}_j(t) = e^{-\mathcal{L}_A t} \mathcal{F}_j e^{\mathcal{L}_A t} $, and $\mathcal{L}_{SA}(t) = -i \sum_j \mathcal{S}_j(t) \mathcal{F}_j(t) $, the von Neumann equation of motion for $\hat{\widetilde{\rho}}_{SA}(t)$ is,
\begin{equation}
    \partial_t \hat{\widetilde{\rho}}_{SA}(t) = \mathcal{L}_{SA}(t) \hat{\widetilde{\rho}}_{SA}(t).
\end{equation}
Note that the backward evolution operators such as $e^{-\mathcal{L}_A t}$ are introduced mainly to simplify the notation, and these operators do not appear explicitly after applying the time-ordering operations below.
The formal solution of $\hat{\widetilde{\rho}}_{SA}(t)$ can be obtained with the Dyson series expansion~\cite{rivas2012open},
\begin{equation}
    \hat{\widetilde{\rho}}_{SA}(t) = \sum_{m=0}^\infty \frac{1}{m!} \mathcal{T} \left( \int_0^t \cdots \int_0^t \mathcal{L}_{SA}(t_1) \cdots \mathcal{L}_{SA}(t_m) dt_1 \cdots dt_m \right) \hat{\rho}_{SA}(0),
\end{equation}
where $\mathcal{T}$ refers to the time-ordering superoperator. We trace out the bath to obtain the system-reduced density operator,
\begin{equation}\label{smeq:dyson_rhoS1}
    \hat{\widetilde{\rho}}_S(t) = \sum_{m=0}^\infty \frac{(-i)^m}{m!} \int_0^t \cdots \int_0^t \sum_{j_1} \cdots \sum_{j_m} \tr_A \left[ \mathcal{T}_A (\mathcal{F}_{j_1}(t_1) \cdots \mathcal{F}_{j_m}(t_m)) \hat{\rho}_A(0) \right] \mathcal{T}_S (\mathcal{S}_{j_1}(t_1) \cdots \mathcal{S}_{j_m}(t_m)) \hat{\rho}_S(0).
\end{equation}
We separate the time-ordering superoperator into the system time-ordering superoperator $\mathcal{T}_S$ and the bath time-ordering superoperator $\mathcal{T}_A$ to be applied within each Liouville operator space. We denote $C_{j_1,\cdots,j_m}(t_1,\cdots,t_m) = \tr_A \left[ \mathcal{T}_A (\mathcal{F}_{j_1}(t_1) \cdots \mathcal{F}_{j_m}(t_m)) \hat{\rho}_A(0) \right] $. From Wick's theorem, if $m$ is odd, it is zero, and if $m$ is even ($m=2n$),
\begin{equation}
    C_{j_1,\cdots,j_{2n}}(t_1,\cdots,t_{2n}) = \sum_{\sigma \in \Pi_{2n}} \prod_{i=1}^n C_{j_{\sigma(2i-1)}, j_{\sigma(2i)}}(t_{\sigma(2i-1)},t_{\sigma(2i)}).
\end{equation}
After inserting it into Eq.~\ref{smeq:dyson_rhoS1},
\begin{equation}\label{smeq:dyson_rhoS2}
    \hat{\widetilde{\rho}}_S(t) = \sum_{n=0}^\infty \frac{(-1)^n}{(2n)!} \int_0^t \cdots \int_0^t \sum_{j_1} \cdots \sum_{j_n} \sum_{\sigma \in \Pi_{2n}}  \mathcal{T}_S \left( \prod_{i=1}^n C_{j_{\sigma(2i-1)}, j_{\sigma(2i)}}(t_{\sigma(2i-1)},t_{\sigma(2i)}) \mathcal{S}_{j_1}(t_1) \cdots \mathcal{S}_{j_n}(t_n)
 \right)\hat{\rho}_S(0) dt_1 \cdots dt_n
\end{equation}
Thanks to the system time ordering superoperator, $\mathcal{T}_S$, $\mathcal{T}_S \left( \prod_{i=1}^n C_{j_{\sigma(2i-1)}, j_{\sigma(2i)}}(t_{\sigma(2i-1)},t_{\sigma(2i)}) \mathcal{S}_{j_1}(t_1) \cdots \mathcal{S}_{j_n}(t_n) \right)$ becomes the same operator for all $\sigma \in \Pi_{2n}$. The number of elements of $\Pi_{2n}$ is $(2n)!!$, so Eq.~\ref{smeq:dyson_rhoS2} is reduced to,
\begin{align}
    \hat{\widetilde{\rho}}_S(t) &= \sum_{n=0}^\infty \frac{(-1)^n}{2^n n!} \int_0^t \cdots \int_0^t \sum_{j_1} \cdots \sum_{j_n} \mathcal{T}_S \left( \prod_{i=1}^n C_{j_{2i-1}, j_{2i}}(t_{2i-1},t_{2i}) \mathcal{S}_{j_1}(t_1) \cdots \mathcal{S}_{j_n}(t_n)
    \right)\hat{\rho}_S(0) dt_1 \cdots dt_n \\
    &= \mathcal{T}_S \exp \left( -\frac{1}{2} \int_0^t \int_0^t \sum_{j_1, j_2} C_{j_1,j_2}(t_1,t_2) \mathcal{S}_{j_1}(t_1) \mathcal{S}_{j_2}(t_2) dt_1 dt_2 \right) \hat{\rho}_S(0) \\
    &= \mathcal{T}_S \exp \left( -\int_0^t \int_0^{t_1} \sum_{j_1, j_2} C_{j_1,j_2}(t_1,t_2) \mathcal{S}_{j_1}(t_1) \mathcal{S}_{j_2}(t_2) dt_1 dt_2 \right) \hat{\rho}_S(0).
\end{align}
From the second to the third line, we fix the ordering between $t_1$ and $t_2$ as $t_1>t_2$. After returning to the original frame and parameterizing the integrand in the time non-local exponential superoperator,
\begin{equation}
    \hat{\rho}_S(t) = e^{\mathcal{L}_S t} \mathcal{T}_S \exp \left( \int_0^t \int_0^{t_1} \mathcal{W}(t_1,t_2) dt_1 dt_2 \right) \hat{\rho}_S(0),
\end{equation}
which is termed  the influence superoperator~\cite{breuer2007open, Aurell_2020, Cirio2022}. The expression for $\mathcal{W}(t_1,t_2)$ after reviving the tilde superoperators becomes,
\begin{align}
    \mathcal{W}(t_1,t_2) = &- \sum_{jj'} \Big( \langle \mathcal{F}_j(t_1) \mathcal{F}_{j'}(t_2) \rangle_A \mathcal{S}_j(t_1) \mathcal{S}_{j'}(t_2) - \langle \mathcal{F}_j(t_1) \widetilde{\mathcal{F}}_{j'}(t_2) \rangle_A \mathcal{S}_j(t_1) \widetilde{\mathcal{S}}_{j'}(t_2) \nonumber \\
    &- \langle \widetilde{\mathcal{F}}_j(t_1) \mathcal{F}_{j'}(t_2) \rangle_A \widetilde{\mathcal{S}}_j(t_1) \mathcal{S}_{j'}(t_2) + \langle \widetilde{\mathcal{F}}_j(t_1) \widetilde{\mathcal{F}}_{j'}(t_2) \rangle_A \widetilde{\mathcal{S}}_j(t_1) \widetilde{\mathcal{S}}_{j'}(t_2) \Big),
\end{align}
where $\langle \bullet \rangle_A = \tr_A[\bullet \hat{\rho}_A(0)]$. This expression can be further simplified by using the following property, $\tr_A[\mathcal{F}_j \bullet] = \tr_A[\widetilde{\mathcal{F}}_j \bullet]$. For the Hamiltonian coupling part, it is clear to see  $\tr_A[\hat{A}_j \bullet] = \tr_A[\bullet \hat{A}_j]$. For the Lindblad coupling part in Eq.~\ref{smeq:F_superoperator}, it can be seen from  direct calculation,
\begin{equation}\label{smeq:F_trace_properties}
    \tr_A[i \bullet \hat{L}'^\dag_{j} - \frac{i}{2} (\hat{L}'_{j} + \hat{L}'^\dag_{j} )\bullet] = \tr_A[\frac{i}{2} (\hat{L}'^\dag_{j}-\hat{L}'_{j}) \bullet] = \tr_A[-i\hat{L}'_{j}\bullet + \frac{i}{2}\bullet (\hat{L}'_{j} + \hat{L}'^\dag_{j} ) ].  
\end{equation}
Combining this property with the trace-preserving property, $\tr_A[\mathcal{L}_A \bullet] = 0$ or $\tr_A[e^{-\mathcal{L}_A t} \bullet] = \tr_A[\bullet]$, we have,
\begin{equation}
    \langle \mathcal{F}_j(t_1) \mathcal{F}_{j'}(t_2) \rangle_A = \langle \widetilde{\mathcal{F}}_j(t_1) \mathcal{F}_{j'}(t_2) \rangle_A, \quad  \langle \mathcal{F}_j(t_1) \widetilde{\mathcal{F}}_{j'}(t_2) \rangle_A = \langle \widetilde{\mathcal{F}}_j(t_1) \widetilde{\mathcal{F}}_{j'}(t_2) \rangle_A.
\end{equation}
Furthermore, by using the fact that $(\mathcal{F}_j \bullet )^\dag = \widetilde{\mathcal{F}}_j (\bullet)^\dag$, we have,
\begin{equation}
    \langle \widetilde{\mathcal{F}}_j(t_1) \widetilde{\mathcal{F}}_{j'}(t_2) \rangle_A = \langle \mathcal{F}_j(t_1) \mathcal{F}_{j'}(t_2) \rangle_A^*.
\end{equation}
Therefore, by defining the BCF as,
\begin{equation}
    C_{jj'}(t_1,t_2) =  \langle \mathcal{F}_j(t_1) \mathcal{F}_{j'}(t_2) \rangle_A = \tr_A[\mathcal{F}_j e^{\mathcal{L}_A(t_1-t_2)}\mathcal{F}_{j'}e^{\mathcal{L}_A(t_2)} \hat{\rho}_A(0)],
\end{equation}
the superoperator $\mathcal{W}(t_1,t_2)$ is expressed as,
\begin{equation}
    \begin{gathered}
        \mathcal{W}(t_1,t_2) = - \sum_{jj'} \left(\mathcal{S}_j(t_1) - \widetilde{\mathcal{S}}_j(t_1)\right) \times \\
        \left(C_{jj'}(t_1,t_2) \mathcal{S}_{j'}(t_2) - C^{*}_{jj'}(t_1,t_2) \widetilde{\mathcal{S}}_{j'}(t_2) \right).
    \end{gathered}
\end{equation}
We see, therefore, that the influence of the bath on the system dynamics, as expressed through the influence superoperator, is entirely parametrized by the BCF. Thus, any two baths with the same BCF give rise to the same system dynamics, which is the equivalence condition.

\section{BCF expression for the spin-boson model}\label{smsec:sec3}

In this section, we derive the BCF expression for the spin-boson model,
\begin{equation}
    \bm{C}^A(\Delta t) = (\bm{V}-i\bm{M}) e^{-\bm{Z}\Delta t} (\bm{V}+i\bm{M})^\dag.
    \label{smeq:Corrt_multisite}
\end{equation}
This can be shown from a direct calculation. First, $e^{\mathcal{L}_A t'} \hat{\rho}_A(0) = e^{\mathcal{L}_A t'} \proj{\bm{0}} = \proj{\bm{0}}$ from $\mathcal{L}_A \proj{\bm{0}} = 0$. We have,
\begin{equation}
    \mathcal{F}_{j'} \proj{\bm{0}} = \sum_k (V_{j'k}^* - i M_{j'k}^*) \hat{d}_k^\dag \proj{\bm{0}},
\end{equation}
from $\hat{d}_k \proj{\bm{0}} = \proj{\bm{0}} \hat{d}_k^\dag = 0$. Then, from $\mathcal{L}_A[\hat{d}_k^\dag \proj{\bm{0}}] = \sum_l (-iH^A_{lk}  - \Gamma_{lk})\hat{d}_l^\dag \proj{\bm{0}}$, we get $e^{\mathcal{L}_A t} \left[\sum_k (V_{j'k}^* - i M_{j'k}^*) \hat{d}_k^\dag \proj{\bm{0}}\right] = \sum_{lk} (e^{-\bm{Z}t})_{lk} (V_{j'k}^* - i M_{j'k}^*) \hat{d}_l^\dag \proj{\bm{0}}$ where $\bm{Z}=\bm{\Gamma}+i\bm{H}^A$. Using the property in Eq.~\ref{smeq:F_trace_properties},
\begin{equation}
    \tr_A \left[ \mathcal{F}_j \bullet \right] = \tr_A \left[ (\sum_k (V_{jk} -i M_{jk}) \hat{d}_k + (V^*_{jk} +i M^*_{jk}) \hat{d}^\dag_k ) \bullet \right].
\end{equation}
Therefore, this leads to,
\begin{equation}
    C_{jj'}^A(t+t',t') = \tr_A \left[ \mathcal{F}_j e^{\mathcal{L}_A t} \mathcal{F}_{j'} e^{\mathcal{L}_A t'} \hat{\rho}_A(0) \right] = \sum_{kl} (V_{jl} - iM_{jl}) (e^{-\bm{Z}t})_{lk} (V_{j'k}^* - i M_{j'k}^*),
\end{equation}
or in the matrix representation,
\begin{equation}
    \bm{C}^A(t) = (\bm{V}-i\bm{M}) e^{-\bm{Z} t} (\bm{V}+i\bm{M})^\dag.
\end{equation}

\section{BCF expression for the fermionic impurity model}
In this section, we discuss the quasi-Lindblad pseudomode formulation for the fermionic impurity model.
The pseudomode model contains two different auxiliary baths, $A_1$ and $A_2$ with the initial state, $\hat{\rho}_{SA}(0) = \hat{\rho}_S(0) \otimes \hat{\rho}_A(0)$ with $\hat{\rho}_A(0) = \hat{\rho}_{A_1}(0) \otimes \hat{\rho}_{A_2}(0) = \proj{\bm{0}} \otimes \proj{\bm{1}}$. We define index sets, $\mathbb{I}_1$ and $\mathbb{I}_2$, to denote the auxiliary mode index $k_1 \in \mathbb{I}_1$ ($k_2 \in \mathbb{I}_2$) for the fermions in the auxiliary bath $A_1$ ($A_2$), respectively. The density operator evolves under the following Lindblad equation,
\begin{equation}
    \partial_t \hat{\rho}_{SA} = -i [\hat{H}_\aux, \hat{\rho}_{SA} ] + (\mathcal{D}_{A_1} + \mathcal{D}_{A_2} + \mathcal{D}_{SA_1} + \mathcal{D}_{SA_2}) \hat{\rho}_{SA}. \label{smeq:pseudomode_fermion}
\end{equation}
The auxiliary Hamiltonian is given by,
\begin{equation}
    \hat{H}_\aux = \hat{H}_S +\sum_{k \in \mathbb{I}_1 \cup \mathbb{I}_2 } E_k \hat{d}^\dag_k \hat{d}_k + \sum_j \hat{a}^\dag_j \hat{A}_j + \hat{A}_j^\dag \hat{a}_j, \quad \hat{A}_j = \sum_{k \in \mathbb{I}_1 } (\bm{V}_1)_{jk_1} \hat{d}_{k_1} + \sum_{k \in \mathbb{I}_2 } (\bm{V}_2)_{jk_2} \hat{d}_{k_2}.
    \label{smeq:pseudomode_fermion2}
\end{equation}
The dissipators are given by,
\begin{equation}
    \begin{gathered}
        \mathcal{D}_{A_1} \bullet = \sum_{k_1 \in \mathbb{I}_1} 2 \gamma_{k_1} \left( \hat{d}_{k_1} \bullet \hat{d}_{k_1}^\dag - \frac{1}{2} \{ \hat{d}_{k_1}^\dag \hat{d}_{k_1}, \bullet \} \right), \\
        \mathcal{D}_{A_2} \bullet = \sum_{k_2 \in \mathbb{I}_2} 2 \gamma_{k_2} \left( \hat{d}^\dag_{k_2} \bullet \hat{d}_{k_2} - \frac{1}{2} \{ \hat{d}_{k_2} \hat{d}_{k_2}^\dag, \bullet \} \right), \\
        \mathcal{D}_{SA_1} \bullet  = \sum_j   \hat{a}_j \bullet \hat{L}_{1j}^\dag + \hat{L}_{1j} \bullet \hat{a}_j^\dag - \frac{1}{2} \{\hat{L}^\dag_{1j} \hat{a}_j + \hat{a}_j^\dag \hat{L}_{1j}, \bullet \},\\
        \mathcal{D}_{SA_2} \bullet = \sum_j   \hat{a}^\dag_j \bullet \hat{L}_{2j} + \hat{L}^\dag_{2j} \bullet \hat{a}_j - \frac{1}{2} \{\hat{L}_{2j} \hat{a}^\dag_j + \hat{a}_j \hat{L}^\dag_{2j}, \bullet \},
    \end{gathered}
    \label{smeq:pseudomode_fermion3}
\end{equation}
with $\hat{L}_{1j} = 2 \sum_{k_1 \in \mathbb{I}_1} (\bm{M}_1)_{jk_1} \hat{d}_{k_1} $ and $\hat{L}_{2j} = 2 \sum_{k_2 \in \mathbb{I}_2} (\bm{M}_2)_{jk_2} \hat{d}_{k_2} $.

We adopt a super-fermion representation~\cite{Schmutz1978, kosov2011, Dorda2014, Park2024TNIF} of Lindblad dynamics to formulate the superoperator formalism for the fermionic system. This representation maps a density operator to a state vector, $\hat{\rho} \mapsto \kett{\rho} $, in the so-called `super-Fock' space with twice the number of original orbitals. This is done with a left-vacuum vector,
\begin{equation}
    \kett{I} = \prod_{k} (\hat{d}_k^{\dag}+\hat{\Tilde{d}}_k^{\dag})\ket{\bm{0}} \otimes \ket{\bm{0}},
\end{equation}
where $\Tilde{d}_k^\dag$ is a fermionic creation operator in the `tilde' space. The density operator is then represented as $\kett{\rho} = (\hat{\rho} \otimes \mathbb{I}) \kett{I} $. For example, the initial bath density operator, $\hat{\rho}_A(0) = \proj{\bm{0}} \otimes \proj{\bm{1}}$ becomes,
\begin{equation}
    \hat{\rho}_A(0) \mapsto \kett{\rho_A(0)} = \kett{01}^{\otimes \mathbb{I}_1} \otimes \kett{10}^{\otimes \mathbb{I}_2}.
\end{equation}
The trace operation can also be expressed with the left-vacuum vector,
\begin{equation}
    \tr[\hat{O} \hat{\rho} ] = \braa{I} \hat{O}\otimes \mathbb{I} \kett{\rho}.
\end{equation}
The following conjugation rules for the left-vacuum vector will become useful,
\begin{equation}\label{smeq:conjugation}
    \hat{d}_k \kett{I} =  \hat{\Tilde{d}}_k \kett{I}, \quad \hat{d}_k^\dag = - \hat{\Tilde{d}}_k^\dag \kett{I}.
\end{equation}
We will assume that the initial density operator is an even-parity density operator for simplicity.
Any superoperator $\mathcal{L}$ can be cast to the operator in the super-Fock space, $\hat{\mathcal{L}}$, or $\mathcal{L} \hat{\rho} \mapsto \hat{\mathcal{L}}\kett{\rho}$. For example, the following dissipator maps to,
\begin{equation}
    \begin{gathered}
        \mathcal{D}\bullet = 2\gamma_k \left(\hat{d}_k \bullet \hat{d}_k^\dag - \frac{1}{2}\{\hat{d}_k^\dag \hat{d}_k, \bullet\} \right) \mapsto
        \hat{\mathcal{D}} = 2\gamma_k \hat{\Tilde{d}}_k^\dag \hat{d}_k - \gamma_k \hat{d}_k^\dag \hat{d}_k + \gamma_k \hat{\Tilde{d}}_k^\dag \hat{\Tilde{d}}_k - \gamma_k.
    \end{gathered}
\end{equation}
This relation can be shown from the direct calculation of $(\mathcal{D} \hat{\rho} ) \kett{I} = \hat{\mathcal{D}} \kett{\rho}$ with the conjugation rule in Eq.~\ref{smeq:conjugation}. The bath Liouvillian superoperator is written as,
\begin{align}
    \hat{\mathcal{L}}_A &= \sum_{k_1 \in \mathbb{I}_1 } (-iE_{k_1} -\gamma_{k_1} ) \hat{d}^\dag_{k_1} \hat{d}_{k_1} +(-i E_{k_1} + \gamma_{k_1} ) \hat{\Tilde{d}}^\dag_{k_1} \hat{\Tilde{d}}_{k_1} + 2\gamma_{k_1} \hat{\Tilde{d}}_{k_1}^\dag \hat{d}_{k_1} +iE_{k_1} - \gamma_{k_1} \nonumber \\
    &+ \sum_{k_2 \in \mathbb{I}_2 } (-iE_{k_2} +\gamma_{k_2} ) \hat{d}^\dag_{k_2} \hat{d}_{k_2} +(-i E_{k_2} - \gamma_{k_2} ) \hat{\Tilde{d}}^\dag_{k_2} \hat{\Tilde{d}}_{k_2} + 2\gamma_{k_2} \hat{{d}}_{k_2}^\dag \hat{\Tilde{d}}_{k_2} + iE_{k_2} - \gamma_{k_2}.
\end{align}
Note that $\hat{\mathcal{L}}_A \kett{\rho_A(0)} = 0$. With this representation, we decompose the coupling superoperators in the following form,
\begin{equation}
    \hat{\mathcal{L}}_{SA} = -i \sum_j \left( \hat{a}_j^\dag \hat{\mathcal{F}}^-_{j} + \hat{\mathcal{F}}^+_{j} \hat{a}_j + \hat{\Tilde{a}}_j^\dag \hat{\widetilde{\mathcal{F}}}^-_{j} + \hat{\widetilde{\mathcal{F}}}^+_{j} \hat{\Tilde{a}}_j \right) .
\end{equation}
where
\begin{equation}
    \begin{gathered}
        \hat{\mathcal{F}}^-_{j} = \hat{A}_j - \frac{i}{2} \hat{L}_{1j} +i  (\hat{\widetilde{L}}_{2j} + \frac{1}{2} \hat{L}_{2j}), \\
        \hat{\mathcal{F}}^+_{j} = \hat{A}^\dag_j +i  (\hat{\widetilde{L}}^\dag_{1j} - \frac{1}{2} \hat{L}_{1j}^\dag ) +\frac{i}{2} \hat{L}_{2j}^\dag, \\
        \hat{\widetilde{\mathcal{F}}}^-_{j} = \hat{\widetilde{A}}_j + i ( \hat{L}_{1j} + \frac{1}{2} \hat{\widetilde{L}}_{1j} ) - \frac{i}{2} \hat{\widetilde{L}}_{2j}, \\
        \hat{\widetilde{\mathcal{F}}}^+_{j} = \hat{\widetilde{A}}^\dag_j + \frac{i}{2} \hat{\widetilde{L}}^\dag_{1j} + i ( \hat{L}_{2j}^\dag - \frac{1}{2} \hat{\widetilde{L}}_{2j}^\dag),
    \end{gathered}
\end{equation}
with $\hat{\widetilde{A}}_j = \sum_{k \in \mathbb{I}_1 \cup \mathbb{I}_2} V_{jk} \hat{\Tilde{d}}_k $, $\hat{\Tilde{L}}_{1j} = 2 \sum_{k_1 \in \mathbb{I}_1} (\bm{M}_1)_{jk_1} \hat{\Tilde{d}}_{k_1} $ and $\hat{L}_{2j} = 2 \sum_{k_2 \in \mathbb{I}_2} (\bm{M}_2)_{jk_2} \hat{\Tilde{d}}_{k_2} $. Note that the $\hat{\mathcal{F}}$ operators with superscripts $-$ ($+$) are composed of a linear combination of $\hat{d}_k$ and $\hat{\Tilde{d}}_k$ ($\hat{d}_k^\dag$ and $\hat{\Tilde{d}}_k^\dag$), respectively. The associated superoperator BCFs are defined by,
\begin{align}
    C^{>A}_{jj'}(t_1,t_2) &=  \braa{I} \hat{\mathcal{F}}^-_{j} e^{\hat{\mathcal{L}}_A (t_1-t_2)} \hat{\mathcal{F}}^+_{j'} e^{\hat{\mathcal{L}}_A t_2} \kett{\rho_A(0)}, \\
    C^{<A}_{jj'}(t_1,t_2) &=  \braa{I} \hat{\mathcal{F}}^+_{j} 
e^{\hat{\mathcal{L}}_A (t_1-t_2)} \hat{\mathcal{F}}^-_{j'} e^{\hat{\mathcal{L}}_A t_2} \kett{\rho_A(0)}.
\end{align}
Explicit calculation leads to the BCF expression in terms of a sum of complex exponential functions,
\begin{equation}
    \begin{gathered}
        \bm{C}^{>A}(t+t',t') = (\bm{V}_{1}-i\bm{M}_{1}) e^{-\bm{Z}_{1}t} (\bm{V}_{1}+i\bm{M}_{1})^\dag, \\
        \bm{C}^{<A}(t+t',t') = (\bm{V}_{2}-i\bm{M}_{2})^* e^{-\bm{Z}_{2} t} (\bm{V}_{2}+i\bm{M}_{2})^T, 
    \end{gathered}
\end{equation}
where $\bm{Z}_{1}$ and $\bm{Z}_{2}$ are diagonal matrices with elements $z_{k_1} = \gamma_{k_1} + iE_{k_1}$ and $z_{k_2} = \gamma_{k_2}-iE_{k_2}$.

From $\hat{\mathcal{L}}_A \kett{\rho_A(0)}=0$, $e^{\hat{\mathcal{L}}_A t' } \kett{\rho_A(0)} = \kett{\rho_A(0)}$. We have,
\begin{equation}
    \begin{gathered}
        \hat{\mathcal{F}}^+_j \kett{\rho_A(0)}=(\hat{A}_j^\dag - \frac{i}{2} \hat{L}_{1j}^\dag ) \kett{\rho_A(0)} = \sum_{k_1 \in \mathbb{I}_1} ((\bm{V}_1)^*_{jk_1} - i (\bm{M}_1)^*_{jk_1}  ) \hat{d}_{k_1}^\dag \kett{\rho_A(0)},  \\
        \hat{\mathcal{F}}^-_j \kett{\rho_A(0)}=(\hat{A}_j + \frac{i}{2} \hat{L}_{2j} ) \kett{\rho_A(0)} = \sum_{k_2 \in \mathbb{I}_2} ((\bm{V}_2)_{jk_2} + i (\bm{M}_2)_{jk_2}  )  \hat{d}_{k_2} \kett{\rho_A(0)}.
    \end{gathered}
\end{equation}
For $k_1 \in \mathbb{I}_1$, $\hat{\mathcal{L}}_A \hat{d}_{k_1}^\dag \kett{\rho_A(0)} = (-iE_{k_1}-\gamma_{k_1}) \hat{d}_{k_1}^\dag \kett{\rho_A(0)}$, and for $k_2 \in \mathbb{I}_2$,  $\hat{\mathcal{L}}_A \hat{d}_{k_2} \kett{\rho_A(0)} = (iE_{k_2}-\gamma_{k_2}) \hat{d}_{k_2} \kett{\rho_A(0)}$. Hence,
\begin{equation}
    \begin{gathered}
        e^{\hat{\mathcal{L}}_At} \hat{\mathcal{F}}_j^+ \kett{\rho_A(0)} = \sum_{k_1 \in \mathbb{I}_1}  e^{(-iE_{k_1}-\gamma_{k_1})t} ((\bm{V}_1)^*_{jk_1} - i (\bm{M}_1)^*_{jk_1}  ) \hat{d}_{k_1}^\dag \kett{\rho_A(0)}, \\
        e^{\hat{\mathcal{L}}_A t} \hat{\mathcal{F}}_j^- \kett{\rho_A(0)} = \sum_{k_2 \in \mathbb{I}_2}  e^{(iE_{k_2}-\gamma_{k_2})t} ((\bm{V}_2)_{jk_2} + i (\bm{M}_2)_{jk_2}  )  \hat{d}_{k_2}  \kett{\rho_A(0)}.
    \end{gathered}
    \label{smeq:impurity_directcalculation1}
\end{equation}
From the conjugation rule in Eq.~\ref{smeq:conjugation}, $\braa{I} \hat{d}_k^\dag = \braa{I} \hat{\Tilde{d}}_k^\dag$ and $\braa{I} \hat{d}_k = - \braa{I} \hat{\Tilde{d}}_k$. Therefore,
\begin{equation}\label{smeq:impurity_directcalculation2}
    \begin{gathered}
        \braa{I} \hat{\mathcal{F}}_j^- = \braa{I} \left( \hat{A}_j - \frac{i}{2}  \hat{L}_{1j} - \frac{i}{2} \hat{L}_{2j} \right) =  \sum_{k \in \mathbb{I}_1 \cup \mathbb{I}_2} \braa{I} (V_{jk}-iM_{jk}) \hat{d}_k, \\
        \braa{I} \hat{\mathcal{F}}_j^+ = \braa{I} \left( \hat{A}_j^\dag + \frac{i}{2} \hat{L}^\dag_{1j} + \frac{i}{2}  \hat{L}^\dag_{2j} \right) =  \sum_{k \in \mathbb{I}_1 \cup \mathbb{I}_2} \braa{I} (V_{jk}-iM_{jk})^* \hat{d}^\dag_k.
    \end{gathered}
\end{equation}
Combining Eq.~\ref{smeq:impurity_directcalculation1} and Eq.~\ref{smeq:impurity_directcalculation2}, the BCFs are expressed as,
\begin{equation}
    \begin{gathered}
        \bm{C}^{>A}(t+t',t') = (\bm{V}_{1}-i\bm{M}_{1}) e^{-\bm{Z}_{1}t} (\bm{V}_{1}+i\bm{M}_{1})^\dag, \\
        \bm{C}^{<A}(t+t',t') = (\bm{V}_{2}-i\bm{M}_{2})^* e^{-\bm{Z}_{2} t} (\bm{V}_{2}+i\bm{M}_{2})^T, 
    \end{gathered}
\end{equation}

The equivalence condition with the BCF in the fermionic bath can be similarly constructed with the Dyson series expansion as in the bosonic bath. For a rigorous construction of the fermionic influence superoperator, we refer to Ref.~\cite{Cirio2022}.

\section{Optimal gauge choice minimizing violation of CP condition}

In the single coupling limit with a diagonal $\bm{Z}$, the relationship between the weight $w$ and the coupling parameters $V$ and $M$ is established as $(V - i M)(V^* - i M^*) = w$. Here, $w, V, M \in \mathbb{C}$. Once $w$ is given, the coupling parameters $V$ and $M$ have a gauge choice parameterized as $V - i M = \kappa \sqrt{w}, V + i M = \kappa^{*-1} \sqrt{w^*}$ where $\kappa \in \mathbb{C}$. We show that the solution of $V$ and $M$ with minimum $|M|$ is then given by $V - iM = \sqrt{w}$ with $V, M \in \mathbb{R}$, ($\kappa = 1$), or $V = \operatorname{Re}( \sqrt{w}), \: M = -\operatorname{Im}(\sqrt{w})$.
\begin{prop}
    Given $w \in \mathbb{C}$, for $\forall \ V, M \in \mathbb{C}$ satisfying the constraint, $(V - i M)(V^* - i M^*) = w$, and its real solution, $V_r, M_r \in \mathbb{R}$, $V_r = \operatorname{Re}(\sqrt{w}), M_r = - \operatorname{Im}(\sqrt{w})$, then $|M| \geq |M_r|$.
\end{prop}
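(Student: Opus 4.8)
The plan is to eliminate the gauge freedom by changing variables to the two factors that actually appear in the constraint, which collapses the problem to a one-line optimization. First I would set $a = V - iM$ and $b = V + iM$, so that $V = (a+b)/2$ and $M = i(a-b)/2$; in particular $\abs{M} = \abs{a-b}/2$. Since $V^{*} - iM^{*} = \overline{V + iM} = \overline{b}$, the constraint $(V-iM)(V^{*}-iM^{*}) = w$ is exactly $a\overline{b} = w$. The claimed real candidate $V_r = \operatorname{Re}(\sqrt{w})$, $M_r = -\operatorname{Im}(\sqrt{w})$ corresponds to $a = \sqrt{w}$, $b = \overline{\sqrt{w}}$, for which $a\overline{b} = \sqrt{w}\,\sqrt{w} = w$, so it is admissible. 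Minimizing $\abs{M}$ over all admissible $(V,M)$ is therefore the same as minimizing $\abs{a-b}$ over all $(a,b)$ with $a\overline{b} = w$ fixed.

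The key observation is that the constraint freezes two real invariants of $(a,b)$. Expanding the square,
\begin{equation}
\abs{a-b}^2 = \abs{a}^2 + \abs{b}^2 - 2\operatorname{Re}(a\overline{b}) = \abs{a}^2 + \abs{b}^2 - 2\operatorname{Re}(w),
\end{equation}
so the cross term is pinned by $a\overline{b} = w$. Separately $\abs{a}\abs{b} = \abs{a\overline{b}} = \abs{w}$ is also fixed, so the only remaining freedom is how the fixed product $\abs{a}\abs{b}$ is split between the two moduli. By AM--GM, $\abs{a}^2 + \abs{b}^2 \ge 2\abs{a}\abs{b} = 2\abs{w}$ with equality iff $\abs{a} = \abs{b} = \sqrt{\abs{w}}$, which gives
\begin{equation}
\abs{M}^2 = \tfrac14 \abs{a-b}^2 \ge \tfrac12 \bigl( \abs{w} - \operatorname{Re}(w) \bigr).
\end{equation}

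Finally I would identify this lower bound with $\abs{M_r}$. Writing $w = \abs{w} e^{i\psi}$, the half-angle identity gives $\operatorname{Im}(\sqrt{w})^2 = \abs{w}\sin^2(\psi/2) = \tfrac12(\abs{w} - \abs{w}\cos\psi) = \tfrac12(\abs{w} - \operatorname{Re}(w))$, which is precisely the right-hand side above; hence $\abs{M} \ge \abs{\operatorname{Im}(\sqrt{w})} = \abs{M_r}$. The AM--GM equality condition $\abs{a} = \abs{b}$ is satisfied by the real solution (where $\abs{a} = \abs{\sqrt{w}} = \abs{b}$), so it attains the minimum, as claimed. There is no genuine obstacle here — the whole argument is a two-invariant computation — and the only point demanding care is recognizing that $a\overline{b} = w$ fixes \emph{both} $\operatorname{Re}(a\overline{b})$ and $\abs{a}\abs{b}$, after which the minimization is immediate. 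I would also note the degenerate case $w = 0$, where $\operatorname{Im}(\sqrt{w}) = 0$ and the bound is trivially saturated, and remark that the branch of $\sqrt{w}$ is immaterial since it changes neither $\abs{\operatorname{Im}(\sqrt{w})}$ nor the location of the optimum.
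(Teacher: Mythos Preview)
Your proof is correct and takes a genuinely different route from the paper. The paper parameterizes $V=|V|e^{i\theta_1}$, $M=|M|e^{i\theta_2}$ directly, extracts from the constraint the two real equations $|V|^2-|M|^2=\operatorname{Re}(w)$ and $2|V||M|\cos(\theta_1-\theta_2)=-\operatorname{Im}(w)$, eliminates $|V|$, and then argues that $|M|\sqrt{\operatorname{Re}(w)+|M|^2}$ is increasing in $|M|$ so the minimum occurs when $|\cos(\theta_1-\theta_2)|=1$, which the real solution satisfies. Your substitution $a=V-iM$, $b=V+iM$ instead diagonalizes the constraint to $a\bar b=w$, after which the two invariants $\operatorname{Re}(a\bar b)$ and $|a||b|$ are visibly frozen and the minimization of $|a-b|^2=|a|^2+|b|^2-2\operatorname{Re}(w)$ collapses to AM--GM on $|a|^2+|b|^2$. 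The advantage of your approach is that it yields the explicit optimal value $|M|^2_{\min}=\tfrac12(|w|-\operatorname{Re}(w))$ directly and avoids the monotonicity check (which in the paper's version quietly relies on $|V|^2=\operatorname{Re}(w)+|M|^2\ge 0$); the paper's approach, on the other hand, stays closer to the physical parameters and makes the phase-alignment condition $\theta_1-\theta_2\in\{0,\pi\}$ (i.e., $V/M\in\mathbb{R}$) explicit rather than hidden in the AM--GM equality case $|a|=|b|$.
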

\begin{proof}
    With a parameterization, $V = |V|e^{i \theta_1}$, $M= |M|e^{i \theta_2}$, the constraint becomes, $|V|^2 - |M|^2 - 2i|V||M|\cos(\theta_1 - \theta_2) = w$ or $|V|^2 - |M|^2 = \operatorname{Re}(w)$ and $|V||M|\cos(\theta_1 - \theta_2) = -\operatorname{Im}(w)/2$. It leads to 
    $$
    |M| \sqrt{\operatorname{Re}(w) + |M|^2} |\cos(\theta_1 - \theta_2)| = |\operatorname{Im}(w)|/2. 
    $$
    Since $|M|\sqrt{\operatorname{Re}(w) + |M|^2}$ is an increasing function in $|M|$, we have minimum $|M|$ when $|\cos (\theta_1 - \theta_2)|=1$, which corresponds to $\theta_1 = \theta_2$ or $\theta_1 = \theta_2 \pm \pi$. The real solution $V_r, M_r$ satisfies this condition on $\theta_1$ and $\theta_2$, so $|M| \geq |M_r|$.
\end{proof}

\section{Proofs of the stability condition in the noninteracting fermionic impurity model}

Recall that for a noninteracting fermionic impurity model, its one-particle reduced density matrix (1-RDM) $P_{pq}(t) = \tr[\hat{c}_q^\dag \hat{c}_p \hat{\rho}_{SA}(t)]$ satisfies a continuous differential Lyapunov equation~\cite{Lotem2020, Barthel_2022}:
\begin{equation}
    \partial_t \bm{P} = \bm{X}\bm{P} + \bm{P}\bm{X}^\dag + \bm{Y},
    \label{smeq:lyapunov_rdm}
\end{equation}
where
\begin{equation}
    \begin{gathered}
        \bm{X} = \left(
        \begin{array}{ccc}
           -i \boldsymbol{H}_{S}  & -i\bm{V}_1-\bm{M}_1 &  -i\bm{V}_2-\bm{M}_2 \\
           -i\bm{V}_1^\dag - \bm{M}_1^\dag  & -\bm{Z}_1 & \bm{0} \\
            -i\bm{V}_2^\dag - \bm{M}_2^\dag & \bm{0} & -\bm{Z}_2
        \end{array}
        \right), \\
        \bm{Y} = \left(
        \begin{array}{ccc}
           \bm{0}  & \bm{0} &  2\bm{M}_2 \\
           \bm{0}  & \bm{0} & \bm{0} \\
            2\bm{M}_2^\dag & \bm{0} & 2 \bm{\Gamma}_2
        \end{array}
    \right).
    \end{gathered}
\end{equation}

\begin{prop} 
Eq.~\ref{smeq:lyapunov_rdm} is asymptotically stable if and only if the real part of all eigenvalues of $\bm{X}$ is strictly negative.
\label{lemma:depend_on_X}
\end{prop}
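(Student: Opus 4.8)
The plan is to solve Eq.~\eqref{smeq:lyapunov_rdm} explicitly and reduce the stability question to the spectrum of the associated Lyapunov superoperator. First I would write the variation-of-constants (Duhamel) solution
$$\bm{P}(t) = e^{\bm{X}t}\,\bm{P}(0)\,e^{\bm{X}^\dag t} + \int_0^t e^{\bm{X}(t-s)}\,\bm{Y}\,e^{\bm{X}^\dag(t-s)}\,ds,$$
which cleanly separates the homogeneous response from the driven response. Since $\bm{Y}$ is a constant source, the affine structure means the asymptotic behavior is governed entirely by the linear part: defining the superoperator $\mathcal{A}(\bm{P}) = \bm{X}\bm{P}+\bm{P}\bm{X}^\dag$ on the space of matrices, any two solutions differ by a purely homogeneous term $e^{\mathcal{A}t}(\bm{P}(0)-\bm{P}'(0))$. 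Hence convergence to a unique fixed point from every initial condition is equivalent to $\|e^{\mathcal{A}t}\|\to 0$ as $t\to\infty$, i.e.\ to $\mathcal{A}$ being Hurwitz (all its eigenvalues having negative real part).

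Next I would compute the spectrum of $\mathcal{A}$. Vectorizing via the identity $\mathrm{vec}(\bm{X}\bm{P}+\bm{P}\bm{X}^\dag)=(\bm{I}\otimes\bm{X}+\overline{\bm{X}}\otimes\bm{I})\,\mathrm{vec}(\bm{P})$ exhibits $\mathcal{A}$ as the Kronecker sum of $\bm{X}$ and $\overline{\bm{X}}$, whose eigenvalues are exactly the sums $\lambda_i+\overline{\lambda_j}$, where $\{\lambda_i\}$ is the spectrum of $\bm{X}$. Therefore $\mathcal{A}$ is Hurwitz if and only if $\operatorname{Re}(\lambda_i)+\operatorname{Re}(\lambda_j)<0$ for all $i,j$. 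The closing elementary step is the equivalence $\operatorname{Re}(\lambda_i)+\operatorname{Re}(\lambda_j)<0\ \forall i,j \iff \operatorname{Re}(\lambda_i)<0\ \forall i$: the forward direction follows by setting $i=j$, and the reverse is immediate because a sum of two negative numbers is negative. This delivers the stated criterion in terms of $\bm{X}$ alone.

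To finish both directions I would make the two limits explicit. When $\bm{X}$ is Hurwitz, the bound $\|e^{\bm{X}t}\|\le Ce^{-\eta t}$ for some $\eta>0$ makes the integral above absolutely convergent, yielding a well-defined stationary solution $\bm{P}_\infty=-\mathcal{A}^{-1}(\bm{Y})$ that uniquely solves the algebraic Lyapunov equation $\bm{X}\bm{P}_\infty+\bm{P}_\infty\bm{X}^\dag+\bm{Y}=0$, and $\bm{P}(t)-\bm{P}_\infty=e^{\mathcal{A}t}(\bm{P}(0)-\bm{P}_\infty)\to 0$. Conversely, if some $\lambda_{i_0}$ has $\operatorname{Re}(\lambda_{i_0})\ge 0$ with eigenvector $\bm{v}$, choosing $\bm{P}(0)=\bm{v}\bm{v}^\dag$ produces a homogeneous component $e^{2\operatorname{Re}(\lambda_{i_0})t}\bm{v}\bm{v}^\dag$ that fails to decay, so no attracting fixed point exists and the dynamics is not asymptotically stable.

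I expect the main obstacle to be bookkeeping rather than anything conceptual: getting the conjugation convention right in the vectorization (the $\overline{\bm{X}}$ rather than $\bm{X}^T$, arising from the Hermitian conjugate $\bm{X}^\dag$), and handling the marginal case $\operatorname{Re}(\lambda_{i_0})=0$ cleanly in the converse, where one notes that $\mathcal{A}$ then has a zero eigenvalue and is non-invertible, so the equilibrium is either nonunique or nonattracting. Everything else reduces to the standard Kronecker-sum spectral mapping and the trivial sign inequality.
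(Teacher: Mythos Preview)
Your proposal is correct and follows essentially the same route as the paper: vectorize the Lyapunov equation, identify the generator as the Kronecker sum $\bm{I}\otimes\bm{X}+\overline{\bm{X}}\otimes\bm{I}$ with spectrum $\{\lambda_i+\overline{\lambda_j}\}$, and reduce the stability criterion to $\operatorname{Re}\lambda_i<0$ for all $i$. Your treatment is in fact more complete than the paper's, which omits the explicit Duhamel representation and the eigenvector argument for the converse direction that you supply.
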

\begin{proof}
    Let $\text{Vec}(\bm{P})$ denote the vectorization of $\bm{P}$. Then Eq.~\ref{smeq:lyapunov_rdm} can be rewritten as:
    \begin{equation}
        \partial_t\text{Vec}(\bm{P}) = \left(\bm{X}\otimes \bm{I} + \bm{I}\otimes \bm{X}^* \right)\text{Vec}(\bm{P}) + \operatorname{Vec}(\bm{Y}).
    \end{equation}
    Therefore, $\bm{P}(t)$ is asymptotically stable if and only if all eigenvalues of $\bm{X}\otimes \bm{I} + \bm{I}\otimes \bm{X}^*$ have strictly negative real parts. Note that $(\lambda, \bm{u})$ is an eigen-pair of $\bm{X}\otimes \bm{I}$ if and only if $\bm{u} = u_1\otimes u_2^*$ and $\bm{X}u_1=\lambda u_1$, and similarly for $\bm{I} \otimes \bm{X}^*$. Also, note that $\bm{X} \otimes \bm{I}$ and $\bm{I} \otimes \bm{X}^*$ commutes. 
    Therefore $\widetilde\lambda_{ij}=\lambda_i+\lambda_j^*$ is the eigenvalue of $\bm{X} \otimes \bm{I}+\bm{I} \otimes \bm{X}^*$ where $\lambda_i,\lambda_j$ are the eigenvalues of $\bm{X}$. 
    Therefore, Eq.~\ref{smeq:lyapunov_rdm} is asymptotically stable if and only if all eigenvalues of $\bm{X}$ have strictly negative real parts. 
\end{proof}

For simplicity, we now assume a single-site system, i.e., $\hat{H}_S = h \hat{a}^\dag \hat{a}$ where $h \in\mathbb R$, and $\bm{X}$ is
\begin{equation}
    \bm{X} = \left( 
    \begin{array}{cc}
         -i h &  -i\bm{V}-\bm{M} \\
         -i\bm{V}^\dag - \bm{M}^\dag & -\bm{Z}
    \end{array}
    \right),
    \label{smeq:single_site_X}
\end{equation}
where we do not separate baths 1 and 2 since there is no difference in the expression for $\bm{X}$. The matrices $\bm{V}$ and $\bm{M}$ are now row vectors of size $1 \times N_b$, where $N_b$ is the total number of bath modes. 
When needed, we will interchangeably refer to these matrices as vectors.

\begin{prop}
    When $\|\bm{M}\|<\frac{1}{2} \min \gamma_{k}$ for $\forall k$, there exists $m>0$, such that if $\|\bm{V}\|>m$, the dynamics Eq.~\ref{smeq:lyapunov_rdm} with $\bm{X}$ being Eq.~\ref{smeq:single_site_X} is asymptotically stable.
    \label{smtheorem:stability}
\end{prop}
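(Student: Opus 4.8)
The plan is to use the preceding proposition (Prop.~\ref{lemma:depend_on_X}) to reduce asymptotic stability to the spectral statement that every eigenvalue of $\bm{X}$ has strictly negative real part once $\|\bm{V}\|$ is large, and then to prove this spectral statement by contradiction. First I would fix the data $\bm{M},\bm{Z},h$ (with $\gamma_{\min}:=\min_k\gamma_k$ and $\|\bm{M}\|<\tfrac12\gamma_{\min}$) and consider a putative eigenpair $\bm{X}\psi=\lambda\psi$, splitting $\psi=(a,\bm{b})^\top$ into the system amplitude $a\in\mathbb{C}$ and bath amplitudes $\bm{b}\in\mathbb{C}^{N_b}$, assuming $\operatorname{Re}\lambda\geq0$.

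The first key step is a Hermitian-part estimate. Computing $\tfrac12(\bm{X}+\bm{X}^\dagger)=\left(\begin{smallmatrix} 0 & -\bm{M}\\ -\bm{M}^\dagger & -\operatorname{diag}(\gamma_k)\end{smallmatrix}\right)$, one sees that the coherent coupling $\bm{V}$ drops out, and taking the expectation in $\psi$ gives $\operatorname{Re}\lambda\,\|\psi\|^2=-2\operatorname{Re}(\bar a\,\bm{M}\bm{b})-\sum_k\gamma_k|b_k|^2$. With $\operatorname{Re}\lambda\geq0$, Cauchy--Schwarz and $\sum_k\gamma_k|b_k|^2\geq\gamma_{\min}\|\bm{b}\|^2$ force $\|\bm{b}\|\leq\frac{2\|\bm{M}\|}{\gamma_{\min}}|a|$ (in particular $a\neq0$), so the ratio $S:=\|\bm{b}\|^2/|a|^2$ obeys $S\leq(2\|\bm{M}\|/\gamma_{\min})^2$, which is strictly below $1$ precisely under the hypothesis on $\|\bm{M}\|$. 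This says the eigenvector must be system-dominated whenever $\operatorname{Re}\lambda\geq0$.

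The second key step is a strong-coupling lower bound on $S$ that contradicts the above. Since $\operatorname{Re}(\lambda+z_k)=\operatorname{Re}\lambda+\gamma_k>0$, the block $\bm{Z}+\lambda I$ is invertible, so the bath equation yields $b_k=-\frac{\overline{M_k-iV_k}}{\lambda+z_k}\,a$ and hence the exact identity $S=\sum_k|M_k-iV_k|^2/|\lambda+z_k|^2$; the system equation gives $|\lambda+ih|\leq\|\bm{V}\|+\|\bm{M}\|$, so $|\lambda|=O(\|\bm{V}\|)$. Negating the proposition produces $\bm{V}_n$ with $g_n:=\|\bm{V}_n\|\to\infty$ and eigenvalues $\lambda_n$, $\operatorname{Re}\lambda_n\geq0$. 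Passing to subsequences, $\bm{V}_n/g_n\to\bm{v}$ with $\|\bm{v}\|=1$ and $\lambda_n/g_n\to\mu_*$ with $|\mu_*|\leq1$, $\operatorname{Re}\mu_*\geq0$. Dividing numerator and denominator of $S_n$ by $g_n^2$ gives $S_n\to\sum_k|v_k|^2/|\mu_*|^2=1/|\mu_*|^2\geq1$ when $\mu_*\neq0$, and $S_n\to\infty$ when $\mu_*=0$. In all cases $\liminf S_n\geq1$, contradicting $S_n\leq(2\|\bm{M}\|/\gamma_{\min})^2<1$, which closes the argument.

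The hard part will be the lower bound of the second step: making $\liminf S\geq1$ rigorous uniformly over the a priori unknown direction and scaling of $\lambda$. The subtlety is that an eigenvalue may scale like $\|\bm{V}\|$ (the strongly hybridized modes, which asymptotically sit near $\operatorname{Re}\lambda\approx-\gamma_k/2$) or stay $O(1)$ (bath-like modes), and one must exclude every intermediate scaling; this is exactly where the normalization $\|\bm{v}\|=1$ combined with $|\mu_*|\leq1$ pins $1/|\mu_*|^2\geq1$ and produces the sharp threshold $\tfrac12\gamma_{\min}$. As a consistency check I would note that for $\bm{M}=0$ the self-energy $\Sigma(\lambda)=(i\bm{V}+\bm{M})(\bm{Z}+\lambda I)^{-1}(i\bm{V}^\dagger+\bm{M}^\dagger)$ satisfies $\operatorname{Re}\Sigma(\lambda)<0$ on $\operatorname{Re}\lambda\geq0$ for \emph{any} $\|\bm{V}\|>0$, confirming that both the large-coupling requirement and the smallness of $\|\bm{M}\|$ are needed solely to control the complete-positivity-violating term.
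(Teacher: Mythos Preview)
Your argument is correct and takes a genuinely different route from the paper. The paper proves the spectral condition via an explicit Lyapunov function: it sets $\mathcal{V}(u)=\tfrac12 u^\dagger\bm{\Theta}u$ with a tilted quadratic form
\[
\bm{\Theta}=\begin{pmatrix}1 & -\,i\epsilon\,\bm{V}/(1+|\bm{V}|^2)\\ i\epsilon\,\bm{V}^\dagger/(1+|\bm{V}|^2) & \bm{I}\end{pmatrix},
\]
shows that $Q(\bm{V})=-(\bm{\Theta}\bm{X}+\bm{X}^\dagger\bm{\Theta})$ converges as $\|\bm{V}\|\to\infty$ to a matrix $Q_\infty$ that is positive definite precisely when $\|\bm{M}\|<\tfrac12\gamma_{\min}$, and then invokes compactness of the unit sphere of directions $\widetilde{\bm{V}}$ to get a uniform threshold $m$. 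You instead work directly at the eigenvector level: the Hermitian-part identity (which is the Lyapunov inequality for the \emph{trivial} choice $\bm{\Theta}=\bm{I}$) bounds $S=\|\bm{b}\|^2/|a|^2$ strictly below $1$, while the explicit resolvent formula $b_k=-\overline{(M_k-iV_k)}\,a/(\lambda+z_k)$ together with a compactness/rescaling argument forces $\liminf S\geq 1$, yielding the contradiction. Your approach is more elementary and makes the mechanism (the eigenvector would have to be simultaneously system-dominated and bath-dominated) very transparent; the paper's construction is more systematic in the Lyapunov-stability tradition and packages the two competing estimates into a single positive-definiteness check. Both proofs use compactness of the unit sphere in the same essential way.

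One small imprecision worth tightening: from the system row you get $|\lambda+ih|\,|a|\leq\|{-i\bm{V}-\bm{M}}\|\,\|\bm{b}\|$, not $|\lambda+ih|\leq\|\bm{V}\|+\|\bm{M}\|$ directly. Combining this with your step-1 bound $\|\bm{b}\|/|a|\leq 2\|\bm{M}\|/\gamma_{\min}<1$ gives $|\mu_*|\leq 2\|\bm{M}\|/\gamma_{\min}<1$, which is in fact stronger than the $|\mu_*|\leq 1$ you assert and still yields $1/|\mu_*|^2>1>(2\|\bm{M}\|/\gamma_{\min})^2\geq S_n$, so the contradiction closes as intended.
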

\begin{proof}
With Proposition~\ref{lemma:depend_on_X}, Eq.~\ref{smeq:lyapunov_rdm} being asymptotically stable is equivalent to $
\partial_t u = \bm{X}u$
being asymptotically stable for any initial vector $u(0)$. To study the stability of $\partial_t u = \bm{X}u$, let us define a Lyapunov function $\mathcal{V}(u)= \frac{1}{2}u^{\dagger} \bm{\Theta} u$, for which the quadratic form $\bm{\Theta}$ will be specified later.
Then we have
$$
\dot{\mathcal{V}}(u(t)) = \frac 1 2 u^{\dagger}(\bm{\Theta}\bm{X} + \bm{X}^{\dagger} \bm{\Theta})u,
$$
Then, by the Lyapunov stability theorem, if $\bm{\Theta}\bm{X} + \bm{X}^{\dagger} \bm{\Theta}$ is negative definite, then $\partial_t u = \bm{X}u$ is asymptotically stable.
Let us consider
\begin{equation}
    \bm{\Theta} = \left(
    \begin{array}{cc}
        1 & -i \frac{\epsilon \boldsymbol{V} }{1 + |\boldsymbol{V}|^2}  \\
        i \frac{\epsilon \boldsymbol{V}^\dag }{1 + |\boldsymbol{V}|^2 } & \boldsymbol{I}
    \end{array}
    \right).
\end{equation}
Here $\epsilon$ are parameters that will be determined later. We define $Q(\boldsymbol{V}) = - (\bm{\Theta}\bm{X} + \bm{X}^{\dagger} \bm{\Theta})$. Let $\boldsymbol{V} = c \widetilde{\boldsymbol{V}}$ where $\widetilde{\boldsymbol{V}}$ is a unit vector, then we have
$$ \lim_{c\rightarrow \infty} Q(c \widetilde{\boldsymbol{V}}) = -\lim_{c\rightarrow \infty}(\bm{\Theta}\bm{X} + \bm{X}^{\dagger} \bm{\Theta}) = \left(\begin{array}{cc}
  2\epsilon   & 2 \boldsymbol{M}\\
  2\boldsymbol{M}^\dag   & 2\boldsymbol{\Gamma} - 2\epsilon\widetilde{\boldsymbol{V}}^\dag \widetilde{\boldsymbol{V}} 
 \end{array}\right) := Q_{\infty}.
 $$
If we let $\epsilon = \frac{1}{2} {\min \gamma_k}$, $\boldsymbol{\Gamma}\geq 2\epsilon \boldsymbol{I}$, and $Q_{\infty} \geq \left(
 \begin{array}{cc}
  2\epsilon    & 2\boldsymbol{M} \\
  2\boldsymbol{M}^\dag    & 4\epsilon \boldsymbol{I} -2\epsilon\widetilde{\boldsymbol{V}}^\dag \widetilde{\boldsymbol{V}}
 \end{array}
 \right)\geq \left(
 \begin{array}{cc}
  2\epsilon    & 2\boldsymbol{M} \\
  2\boldsymbol{M}^\dag    & 2\epsilon \boldsymbol{I}
 \end{array}
 \right)$. Since $\|\boldsymbol{M}\| <\epsilon= \frac{1}{2}\min\gamma_k$, then $Q_{\infty}>0$. Therefore, for any unit vector $\widetilde{\boldsymbol{V}}_0$, there exists a constant $c_0>0$ depending on $\widetilde{\boldsymbol{V}}_0$ such that if $c>c_0$, then $Q = Q(c\widetilde{\boldsymbol{V}}_0)> 0$. Since both $Q$ and $c_0$ depend continuously on $\widetilde{\boldsymbol{V}}$, and the unit sphere is compact, there exists a positive constant $m$ such that when
 $c=\|\boldsymbol{V}\|>m$, we have $Q>0$ and the dynamics is asymptotically stable.
\end{proof}
Proposition \ref{smtheorem:stability} implies that the dynamics, though violating the positivity condition, is stable with sufficiently large system-bath Hamiltonian coupling $\hat H_{SA}$ and small Lindblad coupling $\mathcal{D}_{SA}$.

\section{Details of Fitting procedures}

\subsection{ESPRIT algorithm for complex exponential function fitting with complex weights}

In this section, we briefly describe the Estimation of Signal Parameters via Rotational Invariant Techniques  (ESPRIT) algorithm~\cite{PaulrajRoyKailath1986} used for fitting the BCF as a sum of complex exponential functions with complex weights, $ C(t) \approx \sum_k w_k  e^{- z_k t }$, where $w_k$ and $z_k$ are complex-valued. The  ESPRIT algorithm has been widely adopted in a signal processing community~\cite{RoyKailath1989, RouquetteNajim2001, ShahbazpanahiValaeeBastani2001, Fannjiang2020, DingEpperlyLinetal2024}, and has recently been applied to quantum computing applications~\cite{StroeksTerhal2022, LiNiYing2023}. In this work, we take the ESPRIT algorithm based on the following Hankel matrix~\cite{Fannjiang2020}:
\begin{equation}
    H = \left[\begin{array}{ccccc}
    C(t_0) & C(t_1) & C(t_2) & \cdots & {C(t_{n-1})} \\
    C(t_1) & C(t_2) & C(t_3) & \cdots & C(t_n) \\
    C(t_2) & C(t_3) & C(t_4) & \cdots & {C(t_{n+1})} \\
    \vdots & \vdots & \vdots & \ddots & \vdots \\
    C(t_{n-1}) & C(t_{n}) & C(t_{n+1}) & \cdots & C(t_{2n-2})
    \end{array}\right]
\end{equation}
or $H_{ab} = C(t_{a+b})$, where $t_a = a\delta t$ after discretizing time with a timestep $\delta t$. 
The ESPRIT algorithm is summarized in the following steps to obtain the complex exponent $z_k$.
\begin{itemize}
    \item Step 1: Perform the singular-value decomposition (SVD) of the matrix $H$: 
    $H = U\Sigma V$, where the singular values (diagonal entries of the diagonal matrix $\Sigma$) are ordered non-increasingly.
    \item Step 2: let $U_1 = U[1:n-1, 1:N_{\exp}]$, $U_2 = U[2:n,1:N_{\exp}]$. Calculate $W = U_1^{+} U_2$, where $U_1^{+} = (U_1^{\dagger}U_1)^{-1}U_1^{\dagger}$ is the Moore-Penrose pseudo inverse of $U_1$. 
    \item Step 3: Calculate the eigenvalue $\lambda_1,\cdots,\lambda_{N_{\exp}}$ of the matrix $W$.
    Let $z_k = -\log(\lambda_k)/\Delta t$, where we select the branch of $\log(\cdot)$ such that $\operatorname{Im}(\log(\lambda_k))\in 
 [-\pi,\pi)$.
\end{itemize}
After we get the complex exponent $z_k$, the complex weights $w_k$ are obtained via the following least-squares fitting:
\begin{equation}\label{smeq:leastsquare}
    \min_{\{w_k\}} \sum_a \left|C(t_a)
   -\sum_k w_k  e^{-z_k  t_a}
    \right|^2.
\end{equation}
In the multi-site case, the BCFs are given by a matrix-valued function, $C_{jj'}(t) \approx \sum_k (\bm{W}_k)_{jj'} e^{-z_k t}$. We construct a scalar quantity $C(t)$ by summing over all the entries in $C_{jj'}(t)$, i.e., $C(t) = \sum_{jj'} C_{jj'}(t) $, to estimate $z_k$ as above. Then, we obtain the matrix-valued complex weights, $(\bm{W}_k)_{jj'}$, with the elementwise least-squares fitting with Eq.~\ref{smeq:leastsquare}.

\subsection{Complex exponential function fitting with positive weights}
As a comparison to the complex exponential function fitting with complex weights, we performed the complex exponential function fitting with positive weights (i.e., $w_k \in \mathbb{R}^+$), which is a BCF decomposition for the Lorentzian pseudomode model, with numerical fitting using the L-BFGS-B algorithm. We minimized the following fitting error function,
\begin{equation}
    \mathcal{E}_{\text{fit}} = \delta t \sqrt{ \sum_a \left|C(t_a)
   -\sum_k w_k  e^{-z_k  t_a}
    \right|^2} .
\end{equation}
During the optimization steps, we only used the gradient with respect to $z_k$ and determined $w_k$ from the non-negative least-squares fitting following \cite{HuangGullLin2023}. 
As an initial guess for the optimization problem, we followed the deterministic procedure from Ref.~\cite{Lotem2020, Brenes2020, Zwolak2017, Trivedi2021}, which determines the imaginary part of $z_k$ from the direct discretization of the unitary bath and sets the real part of $z_k$ to some finite values. We chose the imaginary part of $z_k$ from a set of Gauss-Legendre quadrature nodes on the real axis, and the real part of $z_k$ as $\text{Re}[z_k] = 0.1 \times W/N_{\exp}$ where $W$ is a width of the spectrum. 

In the multi-site case, the fitting error function is modified as,
\begin{equation}
    \mathcal{E}_{\text{fit}} = \delta t \sqrt{ \sum_{jj'} \sum_a \left|C_{jj'}(t_a)
   -\sum_k (\bm{W}_k)_{jj'} e^{-z_k  t_a}
    \right|^2},
\end{equation}
with the constraint that $\bm{W}_k\ge 0$.

\section{Details on DMRG calculations}

We describe some details of the time-dependent density matrix renormalization group (tdDMRG) calculations, focusing on the fermionic impurity models. We first express the fermionic Lindblad master equation in the super-fermion representation. This allows us to represent the density operator as a state vector with Schr\"{o}dinger equation-like propagation so that the two-site time-dependent variational principle, implemented in the Block2~\cite{Zhai2021, Zhai2023} package, can be directly used for the state vector propagation. Furthermore, the Liouville operator has a number-conserving form in the super-fermion representation, so we can take advantage of $\mathbb{U}(1)$ symmetry adaptation in the tdDMRG calculations.

Ordering of orbitals is important in DMRG calculations. We employed two different ordering schemes illustrated in Fig.~\ref{smfig:dmrg}. The illustration is based on the spinless single-site impurity model with two bath orbitals in each bath $A_1$ and $A_2$. We have $\hat{\rho}_{SA}(0) = \hat{\rho}_{S}(0) \otimes \hat{\rho}_{A_1}(0) \otimes \hat{\rho}_{A_2}(0)$, where $\hat{\rho}_{A_1}(0) = \proj{00}$ and $\hat{\rho}_{A_2}(0) = \proj{11}$. In the super-fermion representation, $\kett{\rho_{SA}(0)} = \kett{\rho_S(0)} \otimes \kett{\rho_{A_1}(0)} \otimes \kett{\rho_{A_2}(0)}$ where $\kett{\rho_{A_1}(0)} = \kett{0101}$ and $\kett{\rho_{A_2}(0)}=\kett{1010}$. The orbitals from the `tilde' space are drawn with hatching lines in Fig.~\ref{smfig:dmrg}. The two orbital ordering schemes are distinguished by whether the bath $A_1$ and $A_2$ are on the same side (Fig.~\ref{smfig:dmrg}a) or different sides (Fig.~\ref{smfig:dmrg}b) of the impurity. We chose the first scheme for the spinful impurity model to put two different spins on different sides of the impurity~\cite{Saberi2008}. We chose the second scheme for the spinless impurity model. The remaining ordering of the orbitals within each bath is determined from the imaginary part of $z_k$, or equivalently, the energy part, $E_k$, following the physical insights in \cite{ RamsZwolak2020, Zwolak2020}.

\begin{figure}[ht]
    \centering
    \includegraphics[width=0.7\columnwidth]{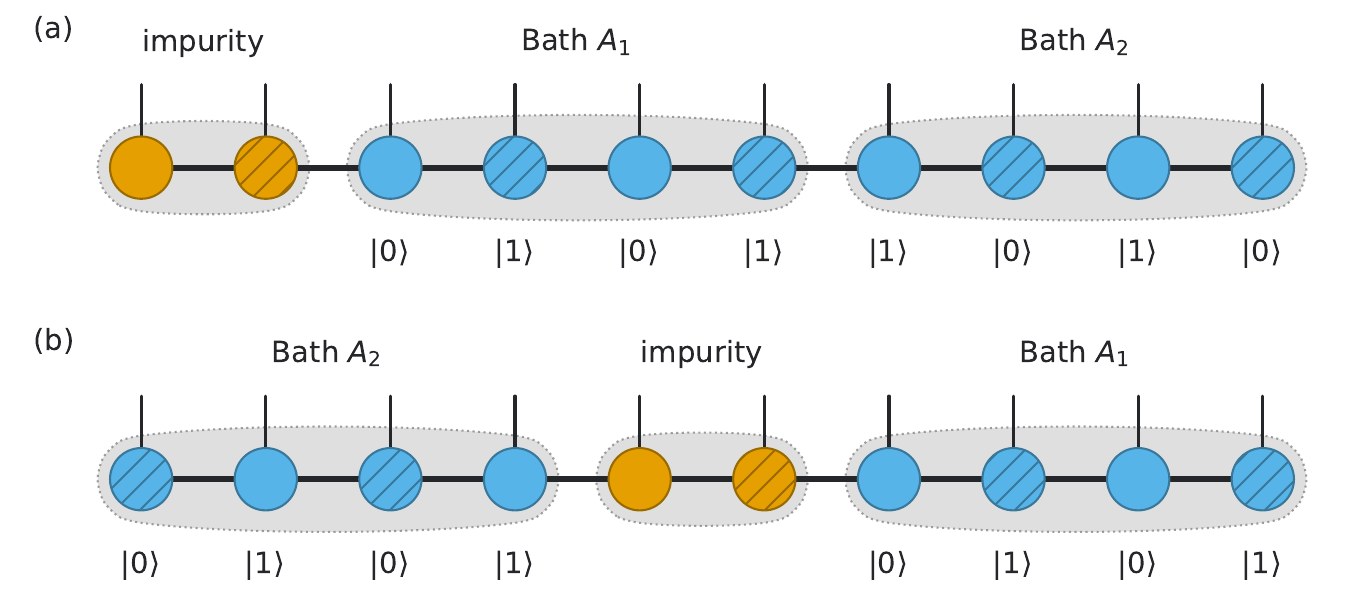}
    \caption{Two different orbital ordering schemes for tdDMRG calculations.}
    \label{smfig:dmrg}
\end{figure}

\end{document}